\documentclass[a4paper, pra,twocolumn,showpacs,superscriptaddress,groupedaddress]{revtex4}
\usepackage{ae}
\usepackage{physics}
\usepackage{python}
\usepackage{comment}
\usepackage[T1]{fontenc}
\usepackage[ansinew]{inputenc}
\usepackage{amsmath}
\usepackage{amssymb}
\usepackage[caption=false]{subfig}
\usepackage{multirow}
\usepackage{array}
\usepackage{natbib}
\usepackage[]{graphicx}
\usepackage{wrapfig}
\usepackage{makecell}
\usepackage{epstopdf}
\usepackage{appendix}
\usepackage{mathtools}
\usepackage{color}
\usepackage{graphicx}
\usepackage[colorlinks=true,linkcolor=red,citecolor=blue,urlcolor=blue]{hyperref}
\usepackage{lscape}
\usepackage{amsthm}
\newtheorem{theorem}{Theorem}
\usepackage{tikz}
\usetikzlibrary{arrows.meta,positioning,fit,calc,backgrounds}
\graphicspath{ {./image/} }

\newtheorem{corollary}{Corollary}

\begin{document}

\title{ Pure State Transformations under Block Coherence }
\author{Dipayan Chakraborty}
\email{dipayan.tamluk@gmail.com, dipayan@ssmahavidyalay.org.in}
\affiliation{Department of Mathematics, Sukumar Sengupta Mahavidyalaya, Keshpur, Paschim Medinipur, 721150, West Bengal, India}
\affiliation{Department of Applied Mathematics, University of Calcutta, 92 A.P.C Road, Kolkata, 700009, West Bengal, India}
\author{Priyabrata Char}
\email{priyabrata_ipdf@iitg.ac.in, mathpriyabrata@gmail.com}
\affiliation{Department of Physics, Indian Institute of Technology Guwahati,
Guwahati, 781039, Assam, India}
\author{Indrani Chattopadhyay}
\email{icappmath@caluniv.ac.in}
\affiliation{Department of Applied Mathematics, University of Calcutta, 92 A.P.C Road, Kolkata, 700009, West Bengal, India}
\author{Debasis Sarkar}
\email{dsarkar1x@gmail.com, dsappmath@caluniv.ac.in}
\affiliation{Department of Applied Mathematics, University of Calcutta, 92 A.P.C Road, Kolkata, 700009, West Bengal, India}

\begin{abstract}

    Block coherence provides a natural generalization of standard quantum coherence by treating superpositions across different subspaces as a resource. This work studies deterministic pure-state conversion under three free operations: physically block incoherent operations (PBIO), strictly block incoherent operations (SBIO), and block dephasing covariant incoherent operations (BDCO). For PBIO, we prove that, under a natural nondegeneracy condition on the active Kraus branches, any deterministic conversion from one pure state to another must be implemented by a block incoherent unitary. When the nondegeneracy requirement is removed, the condition becomes more general. It demands that the blockwise action of every active branch reproduce the target block structure with a common proportionality factor across all output blocks. For SBIO and BDCO, we show that deterministic pure-state transformation is completely characterized by the majorization relation between the input and output block probability vectors. The converse proof is constructive, yielding an explicit Kraus representation for every admissible BDCO transformation. In the rank-one limit, these conditions reduce to the known pure-state transformation criteria for physically incoherent operations (PIO), strictly incoherent operations (SIO), and dephasing covariant incoherent operations (DIO) in the standard resource theory of coherence. Using the majorization condition, a maximally block-coherent state with uniform block weights is also identified as a universal pure-state resource under BDCO and SBIO. We have also provided geometric numerical illustrations comparing the state transformation power of  BDCO and DIO for a fixed input state, fixed output state and mutual convertibility scenarios.
\end{abstract}

\maketitle

\section{Introduction}
Quantum coherence, defined relative to a preferred reference basis, is one of the central resources in quantum information theory \cite{Baumgratz_2014_Quantifying,Streltsov_2017_Colloquium}. It underlies interference, quantum algorithms, metrology and several thermodynamic protocols \cite{Giovannetti_2011_AdvancesMetrology,Hillery_2016_DeutschJozsa,Lostaglio_2015_Description, Lostaglio_2015_TimeTranslation,Narasimhachar_2015_LowTemp,Huelga_2013_Vibrations}. In the standard resource theoretic approach, incoherent states are those that are diagonal in the chosen basis. Free operations such as incoherent operations (IO), strictly incoherent operations (SIO), dephasing covariant incoherent operations (DIO) and physically incoherent operations (PIO) are designed so that they do not generate coherence from incoherent inputs \cite{Chitambar_2016_Critical,Yadin_2016_Processes}. For pure states, convertibility under these free operations is often governed by majorization  relations between the corresponding probability vectors \cite{Winter_2016_Operational,Nielsen_1999_Conditions,Du_2015_Conditions}.
In many physical scenarios, the relevant structure is not a set of one dimensional basis states but a decomposition into orthogonal subspaces \cite{Yadin_2016_Processes,Marvian_2016_Speakable,Theurer_2017_Superposition}. Examples include symmetry sectors, energy eigenspaces and coarse-grained measurement outcomes \cite{Marvian_2014_Noether,Lostaglio_2015_TimeTranslation}. In such cases, coherence inside each block may be physically irrelevant, while superpositions between different blocks remain resourceful \cite{Yadin_2016_Processes,Coherence_subspaces}. The resource theory of block coherence \cite{block_coherence, Dey:2019rje,Yadin_2016_Processes,Coherence_subspaces,brub_second,Ren:2021deq} formalizes this idea by fixing an orthogonal decomposition of $\mathcal{H}= \bigoplus_\alpha \mathcal{H_\alpha}$ with projectors $\Pi_\alpha$, and by identifying block diagonal states as free states together with block coherent states as resource states. Like the standard coherence theory, It also introduces several classes of free operations, including MBIO, BIO, SBIO, PBIO, and BDCO \cite{Dey:2019rje}. Each of these operations originated from a particular physical constraint while satisfying the minimum requirement of being free operations. Detailed definitions of block incoherent states and these operations will be discussed in the preliminary section. 

Several aspects of block coherence have subsequently been investigated, including block coherence measures, cohering and decohering powers of quantum channels \cite{Coherence_subspaces}, state transformation, and the relation between block coherence and generalized notions such as $k$-coherence \cite{block_coherence}. Block coherence has also been used to generate entanglement via BIO \cite{kim2021converting}. Furthermore, the framework of block coherence also plays an important role in the study of POVM-coherence \cite{Bischof_2019_POVMCoherence, Bischof_2021_Quantifying}. POVM-coherence is a generalization of standard quantum coherence from orthonormal basis measurements to arbitrary quantum measurements described by POVM. Using the Naimark extension theorem, a POVM can be studied as a projective measurement in a larger Hilbert space, so that POVM-based coherence can be analyzed within the block coherence framework of the extended Hilbert space. The associated free states become block diagonal with respect to the projective measurement obtained from the Naimark extension of the POVM. Thus, block coherence provides a natural framework for studying quantum superposition in situations where experimentally accessible measurements are not rank one projective measurements.\\
Despite these advances, a complete criterion for state transformation under PBIO, SBIO and BDCO has remained an open problem. Although the study presented in \cite{Coherence_subspaces} examines the transformation of pure states under BIO, a thorough characterization of pure-state convertibility under PBIO, SBIO, and BDCO is still missing. Gaining insight into such transformations is crucial both for operational tasks in block coherence and for understanding how superposition between subspaces behaves under different free operations. The free operations impose a hierarchical structure within states based on the resource. If a state $\rho$ can be converted into a state $\sigma$ by a free operation then $\rho$ may be regarded at least as resourceful as $\sigma$. Consequently, any task achievable from $\sigma$ under these free operations is also achievable by $\rho$. Characterizing this hierarchy requires explicit state transformation criteria. It is therefore natural to first establish pure state transformation criteria and then investigate possible extensions to mixed states. Moreover, every resource theory admits a maximally resourceful state, and establishing the state transformation condition is the key step towards identifying this state precisely for block coherence.\\
In this work, we establish necessary and sufficient convertibility criteria for pure states under PBIO, SBIO, and BDCO. We have shown these criteria reduce to the known pure state transformation criteria for PIO, SIO and DIO, respectively, in the rank one limit. Finally, we also provide numerical examples illustrating how nontrivial block structures modify the geometry of the convertibility region and lead to transformation properties beyond the rank-one setting.\\
The remainder of our work is organized as follows. Section \ref{sec:Preliminary} reviews the necessary preliminaries. Section \ref{sec:PBIO} studies pure state transformation under PBIO, while section \ref{sec:BDCO} presents the definition of BDCO and pure state transformation condition under BDCO and SBIO. Section \ref{sec:Maximally block coherence state} identifies the maximally block coherence state under BDCO and SBIO using majorization theory. Section \ref{sec:Numerical Examples} provides geometric numerical illustrations of the transformation regions. Finally, section \ref{sec:Conclusion} concludes the paper.
\section{Preliminaries}
\label{sec:Preliminary}
Let $\mathcal{H}$ be a finite-dimensional Hilbert space with a fixed orthogonal decomposition
$$\mathcal{H}=\bigoplus_{\alpha=1}^m \mathcal{H_\alpha}$$
and let the corresponding mutually orthogonal projectors be
$$\{\Pi_\alpha\}_{\alpha=1}^m, \qquad \sum_\alpha \Pi_\alpha=I.$$

For a pure state $\ket{\psi}$, we write its block decomposition as
$$\ket{\psi}=\sum_{\alpha=1}^m \ket{\psi_\alpha}, \qquad \ket{\psi_\alpha}=\Pi_\alpha \ket{\psi}$$
For $\ket{\psi},\ket{\phi} \in \mathcal{H}$, we define their support sets
$$R = \{\alpha : \ket{\psi_\alpha} \neq 0\}, 
\qquad 
Q = \{\beta : \ket{\phi_\beta} \neq 0\}.
$$

\textbf{Block Incoherent State}: A state $\delta$ is called a block incoherent state \cite{Yadin_2016_Processes,Dey:2019rje,block_coherence,brub_second} if it can be written as 
\begin{equation}
    \label{eq:block incoherent state}
    \delta=\sum_{\alpha}\Pi_\alpha \delta \Pi_\alpha.
\end{equation}
That is, with respect to the fixed block decomposition 
$\{\Pi_\alpha\}_{\alpha=1}^m$.

\textbf{Block Incoherent Unitary}: We  define the block incoherent unitary \cite{Dey:2019rje} $U_j$ as
\begin{equation}
\label{eq: block incoherent unitary}
    U_j= \sum_{\alpha=1}^{m} V_{j, \alpha} \Pi_{\alpha}.
\end{equation}
Here $V_{j,\alpha}: \mathcal{H_\alpha} \rightarrow \mathcal{H}_{\pi_j(\alpha)}$ is a unitary map and $\pi_j$ is a permutation of block labels. Then $V_{j,\alpha}V_{j,\alpha}^{\dagger}=I_{\pi_j(\alpha)}$ and $V_{j,\alpha}^{\dagger}V_{j,\alpha}=I_{\alpha},$ where $I_{\pi_j(\alpha)} \text{ and } I_{\alpha}$ are identity of $\mathcal{H}_{\pi_j(\alpha)}$ and $\mathcal{H}_{\alpha}$ respectively. Note that the permutation map $\pi_j$ will always transfer a block label $\alpha\to\pi_j(\alpha)$ such that dimension of $\mathcal{H_\alpha}$ and $\mathcal{H}_{\pi_j(\alpha)}$ must be same, otherwise no unitary map $V_{j,\alpha}$ exists between them.

\textbf{Example}: We now give a simple example of a block incoherent unitary. Consider a five-dimensional Hilbert space $\mathcal{H}$ and take an orthogonal decomposition as $$\mathcal{H}=\mathcal{H}_1\bigoplus\mathcal{H}_2\bigoplus\mathcal{H}_3,$$ and the corresponding projectors are $$\Pi_1=\ket{0}\bra{0}+\ket{1}\bra{1}, \Pi_2=\ket{2}\bra{2}+\ket{3}\bra{3}, \Pi_3=\ket{4}\bra{4}.$$
\begin{figure}[ht]
\centering
\includegraphics[width=0.45\textwidth]{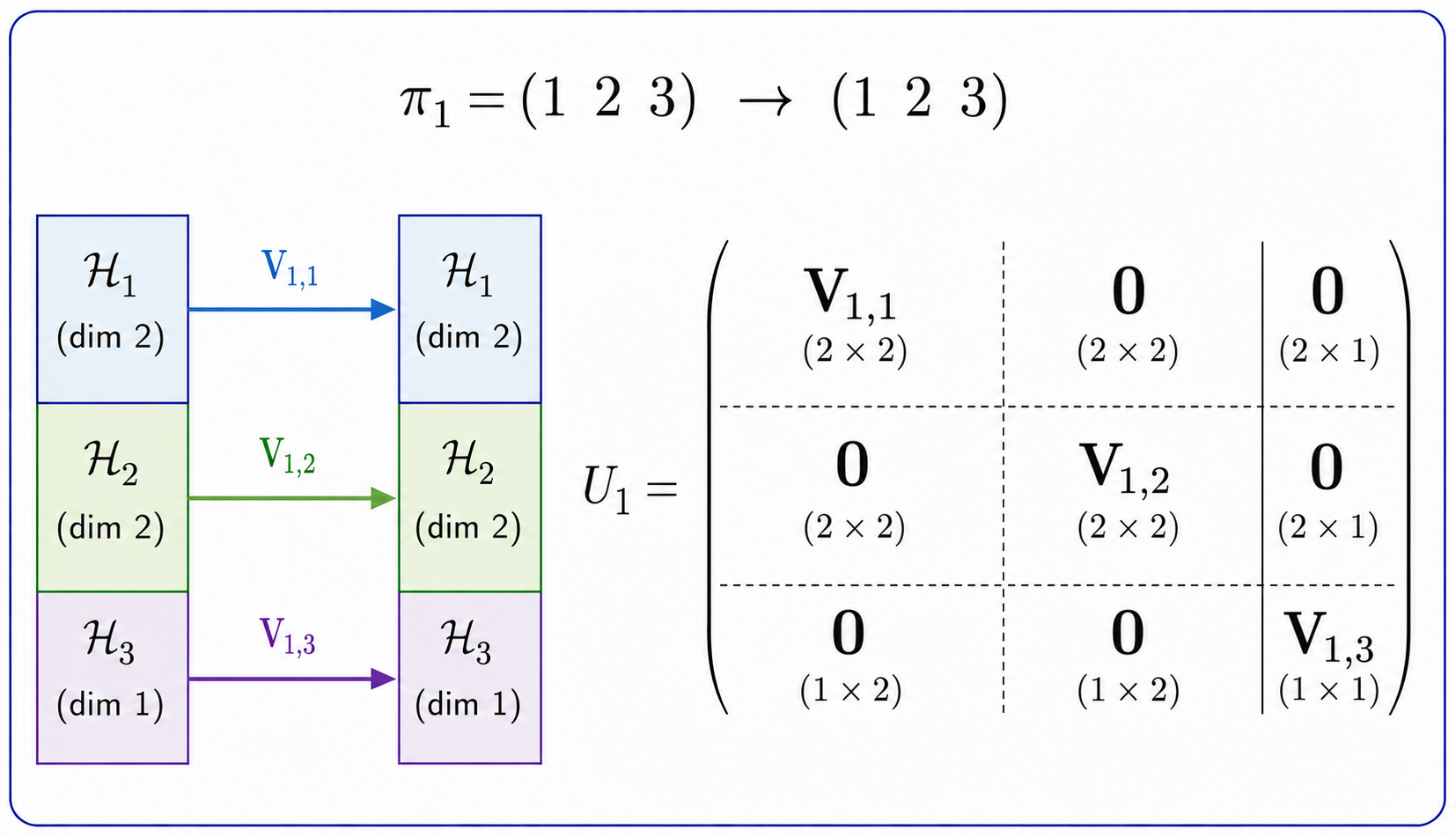}
\caption{Schematic diagram of the block incoherent unitary $U_1$}
\label{fig:Schematic diagram 1}
\end{figure}
Let us consider the block incoherent unitary $U_1 $ as 
$$U_1=V_{1,1}\Pi_1+V_{1,2}\Pi_2+V_{1,3}\Pi_3, \text{ where}$$
$V_{1,1}:\mathcal{H}_1\to\mathcal{H}_{\pi_1(1)}$, $V_{1,2}:\mathcal{H}_2\to\mathcal{H}_{\pi_1(2)}$, $V_{1,3}:\mathcal{H}_3\to\mathcal{H}_{\pi_1(3)}$ are three unitary maps between the corresponding blocks and the associated permutation is $\pi_1=(1,2,3)\to(1,2,3)$. The matrix form of the unitary $U_1$ is given in figure \ref{fig:Schematic diagram 1}.
 Another block incoherent unitary can be $$U_2=V_{2,1}\Pi_1+V_{2,2}\Pi_2+V_{2,3}\Pi_3 \text{ where}$$
$V_{2,1}:\mathcal{H}_1\to\mathcal{H}_{\pi_2(1)}$, $V_{2,2}:\mathcal{H}_2\to\mathcal{H}_{\pi_2(2)}$, $V_{2,3}:\mathcal{H}_3\to\mathcal{H}_{\pi_2(3)}$ are three unitary maps and the corresponding permutation is $\pi_2=(1,2,3)\to(2,1,3)$. The matrix form of the unitary $U_2$ is given in figure \ref{fig:Schematic diagram 2}.
\begin{figure}[ht]
\centering
\includegraphics[width=0.45\textwidth]{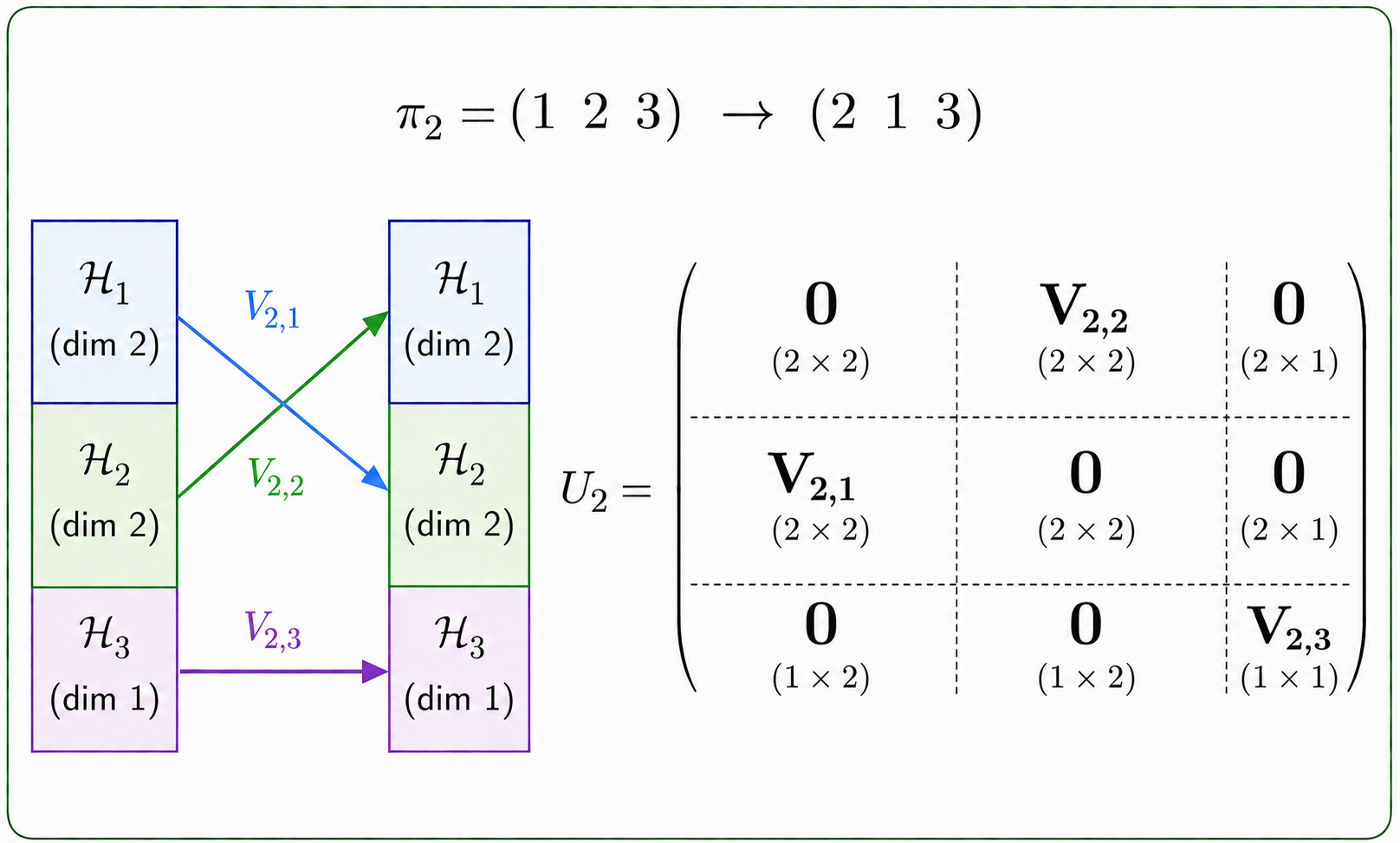}
\caption{Schematic diagram of the block incoherent unitary $U_2$}
\label{fig:Schematic diagram 2}
\end{figure}

\textbf{Block Dephasing Map}: We define the block dephasing map \cite{Yadin_2016_Processes,Dey:2019rje} as
\begin{equation}
\label{eq:block dephasing map}
    \Delta(\rho)=\sum_\alpha \Pi_\alpha \rho \Pi_\alpha.
\end{equation}
 The action of $\Delta$ on a state $\rho$ removes all off-diagonal block terms and maps $\rho$ to a block incoherent state.
 
\textbf{Block Incoherent Operation}: A block incoherent operation (BIO) \cite{Dey:2019rje} is a CPTP map defined as 
\begin{equation}
    \label{eq:BIO} 
    \Lambda(\rho)=\sum_j K_j \rho K_j^\dagger
\end{equation}
such that for all $j$ and block incoherent state $\delta$ $\Delta(K_j \delta K_j^\dagger)=K_j \delta K_j^\dagger.$ The Kraus operator of a BIO can be written in the form. 
\begin{equation}
\label{eq:BIO kraus form}
    K_j=\sum_l \Pi_{f(l)}C_j\Pi_l, 
\end{equation}
where $f:\{1,2,...,m\} \rightarrow \{1,2,...,m\}$ is an index function and $C_j$ is a complex matrix. 

\textbf{Strictly Block Incoherent Operation}: A strictly block incoherent operation (SBIO) \cite{Dey:2019rje} is a CPTP map defined as 
\begin{equation}
    \label{eq:SBIO} 
    \Lambda(\rho)=\sum_j K_j \rho K_j^\dagger
\end{equation}
such that for all $j$ and $\rho$ $\Delta(K_j \rho K_j^\dagger)=K_j \Delta(\rho) K_j^\dagger.$ The Kraus operator of an SBIO can be written in the form
\begin{equation}
\label{eq:SBIO kraus form}
    K_j=\sum_l \Pi_{\pi(l)}C_j\Pi_l, 
\end{equation}
where $\pi$ is a permutation function of the block indices and $C_j$ is a complex matrix. 

\textbf{Physically Block Incoherent Operation}: A physically block incoherent operation (PBIO) \cite{Dey:2019rje} is a CPTP map
$$T(\rho)=\sum_j K_j \rho K_j^\dagger$$
whose Kraus operators are of the form
$$K_j=U_jP_j, \qquad  P_j=\sum_{\alpha \in S_j} \Pi_\alpha$$
where $\{S_j\}_j$ is a partition of $\{1,2,...,m\}$, and each $U_j$ is a block incoherent unitary. 

\textbf{Doubly Stochastic Matrix}: A matrix $T=(t_{ij})_{d\times d}$ is called a doubly stochastic matrix \cite{Marshall_2011_Majorization,Bhatia_1997_MatrixAnalysis} if its elements are non-negative real numbers and the sum of the entries in each row and each column is 1, that is $$\sum_{i=1}^{d}t_{ij}=1,\text{ } \sum_{j=1}^{d}t_{ij}=1\text{ }\forall j,i=1,2,...,d.$$

\textbf{Doubly substochastic Matrix}: A matrix $T=(t_{ij})_{d\times d}$ is called a doubly substochastic matrix \cite{Marshall_2011_Majorization,Bhatia_1997_MatrixAnalysis} if its elements are non-negative real numbers and the sum of the entries of each row and each column are at most 1, that is $$\sum_{i=1}^{d}t_{ij}\leq 1,\text{ } \sum_{j=1}^{d}t_{ij}\leq 1\text{ }\forall j,i=1,2,...,d.$$

\textbf{Column Stochastic Matrix}: A matrix $T=(t_{ij})_{d_1\times d_2}$ is called a column stochastic matrix if its elements are non-negative real numbers and the sum of the entries of each column is 1, that is $$\sum_{i=1}^{d_1}t_{ij}=1,\text{ } \forall j=1,2,...,d_2.$$

\textbf{Majorization}: Let us consider two vectors $\tilde{x}=(x_1,x_2,...,x_d)$ and $\tilde{y}=(y_1,y_2,...,y_d)$. We will say $\tilde{x}$ is majorized \cite{Marshall_2011_Majorization,Uhlmann_1970_Shannon} by $\tilde{y}$ denoted as $\tilde{x}\prec\tilde{y}$ if
\begin{align*}
    x_1^{\downarrow}&\leq y_1^{\downarrow}\\
    x_1^{\downarrow}+x_2^{\downarrow}&\leq y_1^{\downarrow}+y_2^{\downarrow}\\
    ......................&......................\\
    x_1^{\downarrow}+x_2^{\downarrow}+...+x_{d-1}^{\downarrow}&\leq y_1^{\downarrow}+y_2^{\downarrow}+...+y_{d-1}^{\downarrow}\\
    x_1^{\downarrow}+x_2^{\downarrow}+...+x_{d}^{\downarrow}&= y_1^{\downarrow}+y_2^{\downarrow}+...+y_{d}^{\downarrow}
\end{align*}
where $(x_1^{\downarrow},x_2^{\downarrow},...,x_d^{\downarrow})$ and $(y_1^{\downarrow},y_2^{\downarrow},...,y_d^{\downarrow})$ are sorted vectors of $\tilde{x}$ and $\tilde{y}$ respectively, $(x_i^{\downarrow}\geq x_{i+1}^{\downarrow}$ and $y_i^{\downarrow}\geq y_{i+1}^{\downarrow}, \text{ }\forall i=1,2,..,d-1$) of $\tilde{x}$ and $\tilde{y}$ vectors respectively.\\
\textbf{Weak Majorization}
Let us consider two vectors $\tilde{x}=(x_1,x_2,...,x_d)$ and $\tilde{y}=(y_1,y_2,...,y_d)$. we will say $\tilde{x}$ is weakly majorized \cite{Marshall_2011_Majorization,Bhatia_1997_MatrixAnalysis} by $\tilde{y}$ denoted as $\tilde{x}\prec_w\tilde{y}$ if
\begin{align*}
    x_1^{\downarrow}&\leq y_1^{\downarrow}\\
    x_1^{\downarrow}+x_2^{\downarrow}&\leq y_1^{\downarrow}+y_2^{\downarrow}\\
    ......................&......................\\
    x_1^{\downarrow}+x_2^{\downarrow}+...+x_{d-1}^{\downarrow}&\leq y_1^{\downarrow}+y_2^{\downarrow}+...+y_{d-1}^{\downarrow}\\
    x_1^{\downarrow}+x_2^{\downarrow}+...+x_{d}^{\downarrow}&\leq y_1^{\downarrow}+y_2^{\downarrow}+...+y_{d}^{\downarrow}
\end{align*}
where $(x_1^{\downarrow},x_2^{\downarrow},...,x_d^{\downarrow})$ and $(y_1^{\downarrow},y_2^{\downarrow},...,y_d^{\downarrow})$ are sorted vectors of $\tilde{x}$ and $\tilde{y}$ respectively, $(x_i^{\downarrow}\geq x_{i+1}^{\downarrow}$ and $y_i^{\downarrow}\geq y_{i+1}^{\downarrow}, \text{ }\forall i=1,2,..,d-1$) of $\tilde{x}$ and $\tilde{y}$ vectors respectively.

\begin{theorem}
    If $\tilde{x}$ is majorized by $\tilde{y}$, that is $\tilde{x}\prec\tilde{y}$ if and only if there exists a doubly stochastic matrix $T$ such that $ \tilde{x}=T\tilde{y}.$ \cite{Marshall_2011_Majorization}
\end{theorem}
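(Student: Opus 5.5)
The plan is to prove the two directions of this classical Hardy--Littlewood--P\'olya theorem separately. The direction "$\tilde{x}=T\tilde{y}$ with $T$ doubly stochastic implies $\tilde{x}\prec\tilde{y}$" is elementary. The converse will be handled by building $T$ explicitly as a product of finitely many T-transforms, i.e.\ matrices of the form $\lambda I+(1-\lambda)P_{(ij)}$ with $P_{(ij)}$ a transposition and $\lambda\in[0,1]$.

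\textbf{Easy direction.} Suppose $\tilde{x}=T\tilde{y}$. Conjugating by permutation matrices preserves double stochasticity, so I may assume both vectors are sorted decreasingly. For $1\le k\le d$, write
\[
\sum_{i=1}^k x_i=\sum_{j=1}^d c_j y_j, \qquad c_j=\sum_{i=1}^k t_{ij}.
\]
The coefficients satisfy $0\le c_j\le 1$ because the columns sum to $1$, and $\sum_j c_j=k$ because the rows sum to $1$. Then
\[
\sum_{j\le k}y_j-\sum_j c_jy_j=\sum_{j\le k}(1-c_j)(y_j-y_k)+\sum_{j>k}c_j(y_k-y_j)\ge 0.
\]
This is the standard rearrangement step, and it gives the partial-sum inequalities. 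Equality at $k=d$ follows since $c_j=1$ for all $j$ when $k=d$.

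\textbf{Converse.} Assume $\tilde{x}\prec\tilde{y}$, and again take both vectors sorted decreasingly. I would argue by induction on the number of indices where $x_i\neq y_i$. If $\tilde{x}\neq\tilde{y}$, choose the largest $j$ with $y_j>x_j$. Such an index exists, because the partial sums are dominated and the totals are equal, so some coordinate must exceed. Next choose the smallest $k>j$ with $y_k<x_k$; this exists since the total sums agree. Set $\delta=\min(y_j-x_j,\,x_k-y_k)$ and define $\tilde{y}'$ by moving mass $\delta$ from coordinate $j$ to coordinate $k$. This $\tilde{y}'$ is a T-transform of $\tilde{y}$ with $\lambda$ chosen so that
\[
y'_j=\lambda y_j+(1-\lambda)y_k = y_j-\delta .
\]

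\textbf{Main obstacle.} The real work is verifying three things: that $\tilde{y}'$ stays ordered between $j$ and $k$ (by the choice of $j$ and $k$, $y_l=x_l$ for $j<l<k$), that $\tilde{x}\prec\tilde{y}'$ still holds (partial sums change only for indices $j\le s<k$, and there the slack is at least $\delta$), and that $\tilde{y}'$ agrees with $\tilde{x}$ in at least one more coordinate. This is the step I expect to require the most care with the index bookkeeping.

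Once these are established, induction yields $\tilde{x}=T_r\cdots T_1\tilde{y}$. A product of doubly stochastic matrices is doubly stochastic, and undoing the initial sorting permutations preserves this, which completes the proof.
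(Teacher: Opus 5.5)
The paper does not prove this statement. It quotes the Hardy--Littlewood--P\'olya theorem from \cite{Marshall_2011_Majorization} and uses it as a black box, together with Birkhoff--von Neumann, in Appendix~A. Your argument is the classical proof from that reference, and it is correct. The easy direction is the coefficient identity with $0\le c_j\le 1$ and $\sum_j c_j=k$. The converse is Muirhead's reduction by T-transforms. The checks you single out as the main obstacle all go through:
\begin{itemize}
\item For $j\le s<k$ the slack is $\sum_{i<j}(y_i-x_i)+(y_j-x_j)\ge\delta$, so the partial-sum inequalities survive.
\item $\tilde{y}'$ stays sorted. Inside the range this follows from $y'_j\ge x_j$, $y'_k\le x_k$, $y'_l=x_l$ for $j<l<k$, and the sortedness of $\tilde{x}$. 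At the outer boundaries it follows from $y'_j<y_j$ and $y'_k>y_k$.
\end{itemize}

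One justification needs repair. The existence of $k$ does not follow from equal totals alone. Take $\tilde{x}=(1,0)$, $\tilde{y}=(\tfrac12,\tfrac12)$: the largest $j$ with $y_j>x_j$ is $j=2$, and no $k>j$ exists. You need domination at $j-1$, which gives $\sum_{i>j}(x_i-y_i)=\sum_{i\le j}(y_i-x_i)\ge y_j-x_j>0$. Conversely, the existence of $j$ needs only $\tilde{x}\neq\tilde{y}$ and equal totals. You should also record the chain $y_j>x_j\ge x_k>y_k$. It gives $0<\delta<y_j-y_k$, so $\lambda\in(0,1)$ and the step really is a T-transform.

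Compared with the bare citation, your route is self-contained and yields more. It writes $T$ as a product of at most $d-1$ T-transforms, placed between two permutation matrices; at most $d-1$ because exactly one disagreeing coordinate is impossible when totals are equal. Expanding that product expresses $T$ directly as an explicit convex combination of permutation matrices. That is exactly the input the converse construction in Appendix~A needs, so the separate appeal to Birkhoff--von Neumann could be dropped.
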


\begin{theorem}
    If $\tilde{x}$ is weakly majorized by $\tilde{y}$, that is $\tilde{x}\prec_w\tilde{y}$ if and only if there exists a doubly substochastic matrix $T$ such that $\tilde{x}=T\tilde{y}  
    .$  \cite{Marshall_2011_Majorization}
\end{theorem}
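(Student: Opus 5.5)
The plan is to derive the weak-majorization theorem from the preceding result on ordinary majorization. The two tools are a padding trick and the characterization of doubly stochastic matrices; throughout, vectors are taken with nonnegative entries, which is the only setting in which the equivalence holds. The approach has two directions.

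\emph{Substochastic implies weak majorization.} Suppose $\tilde{x}=T\tilde{y}$ with $T$ doubly substochastic. The first step is to show that $T$ is entrywise dominated by some doubly stochastic matrix $S$, so $T\le S$ entrywise. For this I would use the standard completion argument: the row deficits $1-\sum_j t_{ij}$ and the column deficits $1-\sum_i t_{ij}$ are nonnegative and have equal total. One can therefore add a nonnegative matrix with exactly these row and column sums, for instance built by the northwest-corner rule. Setting $z=S\tilde{y}$, the preceding theorem gives $z\prec\tilde{y}$. Because $\tilde{y}\ge 0$, we also have $\tilde{x}\le z$ componentwise. Consequently the sum of the $k$ largest entries of $\tilde{x}$ is at most that of $z$, which is at most that of $\tilde{y}$. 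This yields $\tilde{x}\prec_w\tilde{y}$.

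\emph{Weak majorization implies substochastic.} Suppose $\tilde{x}\prec_w\tilde{y}$ with $\tilde{x},\tilde{y}\ge0$, and let $\delta=\sum_i y_i-\sum_i x_i\ge 0$. If $\delta=0$, the preceding theorem applies directly. Otherwise, the construction proceeds in four steps.
\begin{enumerate}
\item Increase the smallest entry $x^{\downarrow}_d$ of $\tilde{x}$ by as much as possible, but not beyond $y^{\downarrow}_d$, and then continue raising the remaining smaller entries in turn.
\item Aim for a vector $\tilde{u}\ge\tilde{x}$ with $\tilde{u}\prec\tilde{y}$. An equivalent route is to append $d$ auxiliary coordinates: compare $(\tilde{x},\delta,0,\dots,0)$ with $(\tilde{y},0,\dots,0)$ in dimension $2d$ and check that genuine majorization holds there.
\item Apply the preceding theorem to obtain a doubly stochastic $2d\times 2d$ matrix $S$ realizing this majorization.
\item Take the upper-left $d\times d$ block of $S$. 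Its row sums and column sums are at most $1$, so it is doubly substochastic.
\end{enumerate}
One difficulty remains with the embedding. The extra coordinate $\delta$ of the padded $\tilde{x}$ may draw contributions from the $\tilde{y}$ coordinates, so the upper-left block alone need not map $\tilde{y}$ exactly to $\tilde{x}$.

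To avoid this issue I would prefer the first route, finding $\tilde{u}$ with $\tilde{x}\le\tilde{u}\prec\tilde{y}$. By the preceding theorem, $\tilde{u}=S\tilde{y}$ for some doubly stochastic $S$. I would then set $T=DS$, where $D$ is diagonal with entries $d_i=x_i/u_i\in[0,1]$, taking $d_i=0$ whenever $u_i=0$. The matrix $T$ is doubly substochastic and satisfies $T\tilde{y}=\tilde{x}$.

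The main obstacle is constructing $\tilde{u}$. I would proceed greedily and assume $\tilde{x}$ sorted in decreasing order. Keep $x_1,\dots,x_{k}$ unchanged for the largest $k$ such that the constraint on $\tilde{u}$ being majorized by $\tilde{y}$ remains feasible. Then raise the tail entries to a common level $c$ chosen so that the total equals $\sum_i y_i$. It remains to verify that the partial-sum inequalities persist. This follows from concavity: when the tail entries are equalized, the partial sums become linear over the tail, whereas $\tilde{y}$'s partial sums are concave, and the two agree at the endpoints.
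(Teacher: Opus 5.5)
\paragraph{Overall.} The paper gives no proof of this statement; it only cites Marshall--Olkin--Arnold, so your reduction to Theorem 1 has to stand on its own. Its skeleton is sound and is the textbook route. The pieces that are correct are: the deficit completion $S=T+E\ge T$; the bound $\tilde x\le S\tilde y$, which uses $\tilde y\ge 0$, together with monotonicity of top-$k$ sums; and the scaling $T=DS$ with $d_i=x_i/u_i$, where $u_i=0$ forces $x_i=0$. You are also right that nonnegativity must be added to the statement as the paper writes it.

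\paragraph{The gap: constructing $\tilde u$.} The gap is in the step you flag yourself, the construction of $\tilde u$ with $\tilde x\le\tilde u\prec\tilde y$. Choosing ``the largest $k$ for which $\tilde u\prec\tilde y$ remains feasible'' presupposes exactly what must be shown. Your concavity check also silently requires the common level to satisfy $c\le x_k$; otherwise the sorted partial sums of $\tilde u$ are not $X_j$ followed by a linear piece. Separately, $\tilde u\ge\tilde x$ requires $c\ge x_{k+1}$. Your rule does not secure the first condition, and you never check the second. For instance, with $\tilde x=(1,1,0)$ and $\tilde y=(10,1,1)$, your rule picks $k=2$ and $c=10>x_2$, so $\tilde u$ is unsorted and your verification says nothing about it.

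\paragraph{Fix: water-filling.} Take $c\ge x_d^{\downarrow}$ solving $\sum_i\max(x_i,c)=\sum_i y_i$. Such $c$ exists because the left side is continuous, equals $\sum_i x_i\le\sum_i y_i$ at $c=x_d^{\downarrow}$, and is unbounded. Set $u_i=\max(x_i,c)$ and $k=\#\{i:x_i>c\}$. For sorted $\tilde x$, $\tilde u=(x_1,\dots,x_k,c,\dots,c)$ is again sorted and dominates $\tilde x$. Write $X_j=\sum_{i\le j}x_i^{\downarrow}$ and $Y_j=\sum_{i\le j}y_i^{\downarrow}$. Then $U_j=X_j\le Y_j$ for $j\le k$. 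On $[k,d]$ the partial sums $U_j$ are linear from $U_k=X_k\le Y_k$ to $U_d=Y_d$. They agree with $Y$ only at the right endpoint, not at both as you wrote. Lying below the chord of the concave $j\mapsto Y_j$ is nevertheless enough.

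\paragraph{The padding route.} Your reason for abandoning it is mistaken. The last $d$ entries of $(\tilde y,0,\dots,0)$ vanish, so the first $d$ rows of $S$ give $\tilde x=S_{11}\tilde y$ exactly, whatever the extra rows do. The real defect is that putting all of $\delta$ into one coordinate can break majorization. For example, with $\tilde x=(0,0)$ and $\tilde y=(1,1)$ one gets $(0,0,2,0)\not\prec(1,1,0,0)$.

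Spreading $\delta$ evenly repairs this: $(\tilde x,\delta/d,\dots,\delta/d)\prec(\tilde y,0,\dots,0)$. Any top-$k$ sum on the left has the form $X_a+b\delta/d$ with $a+b=k$.
\begin{itemize}
\item For $k\le d$, $X_a+b\delta/d\le(1-b/d)Y_a+(b/d)Y_d\le Y_{a+b}$. This uses $X_a\le X_d$, $X_a\le Y_a$, and concavity plus monotonicity of the interpolated $Y$.
\item For $k>d$, the bound $X_a+b\delta/d\le X_d+\delta=Y_d$ is immediate.
\end{itemize}
Theorem 1 in dimension $2d$ then yields $\tilde x=S_{11}\tilde y$, with $S_{11}$ the upper-left $d\times d$ block of a doubly stochastic matrix. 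This completes the hard direction without constructing $\tilde u$ at all.
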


\section{Pure State Transformation Under PBIO}
\label{sec:PBIO}
\begin{theorem}
\label{thm:pbio}
Let $\ket{\psi}, \ket{\phi} \in \mathcal{H}$, and let
$$T(\rho)=\sum_j K_j \rho K_j^\dagger$$ 
be a PBIO channel. 
We further assume that for every $\beta\in Q$ and for every pair of
distinct active indices $j\neq k$, the union family
$$
\{V_{j,\alpha}\ket{\psi_\alpha} : \alpha\in  A_{j,\beta}\}
\cup
\{V_{k,\alpha}\ket{\psi_\alpha} : \alpha\in  A_{k,\beta}\}
$$
is linearly independent. Then $\ket{\psi}$ can be transformed to $\ket{\phi}$ if and only if there exists a block-incoherent unitary $U_{BI}$ such that
$$
\ket{\phi} = U_{BI}\ket{\psi}.
$$
For each $j$ and $\beta$, $A_{j,\beta} =\{\alpha\in R\cap S_j : \pi_j(\alpha)=\beta\}.$
\end{theorem}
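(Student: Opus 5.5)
The "if" direction is immediate. A block-incoherent unitary $U_{BI}$ is itself a PBIO with a single Kraus operator $K_1=U_{BI}P_1$, where $S_1=\{1,\dots,m\}$ is the trivial partition and $P_1=I$. For the "only if" direction, suppose $T(\ket{\psi}\bra{\psi})=\ket{\phi}\bra{\phi}$. Since the output is pure and every term $K_j\ket{\psi}\bra{\psi}K_j^\dagger$ is positive, each term must be proportional to the output. So for every active index $j$ (those with $K_j\ket{\psi}\neq 0$) we have
$$K_j\ket{\psi}=c_j\ket{\phi},\qquad c_j\neq 0,\qquad \sum_j|c_j|^2=1 .$$

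Next, expand $K_j\ket{\psi}=U_jP_j\ket{\psi}=\sum_{\alpha\in R\cap S_j}V_{j,\alpha}\ket{\psi_\alpha}$. Applying $\Pi_\beta$ gives
$$\sum_{\alpha\in A_{j,\beta}}V_{j,\alpha}\ket{\psi_\alpha}=c_j\ket{\phi_\beta}.$$
Two consequences follow.

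\begin{itemize}
\item For $\beta\in Q$, the set $A_{j,\beta}$ is nonempty for every active $j$. For $\beta\notin Q$, the sum vanishes.
\item Take two distinct active indices $j\neq k$ and some $\beta\in Q$. Then $c_k\sum_{A_{j,\beta}}V_{j,\alpha}\ket{\psi_\alpha}-c_j\sum_{A_{k,\beta}}V_{k,\alpha}\ket{\psi_\alpha}=0$. This is a nontrivial linear relation among the union family, because its coefficients $c_k,c_j$ are nonzero and both index sets are nonempty. That contradicts the linear-independence hypothesis.
\end{itemize}

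Hence exactly one Kraus index, say $j_0$, is active, and $|c_{j_0}|=1$.

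It remains to show that a single active branch amounts to a block-incoherent unitary acting on $\ket{\psi}$. Because only $j_0$ is active, $K_j\ket{\psi}=U_jP_j\ket{\psi}=0$ for every $j\neq j_0$. Since $U_j$ is unitary, this means $P_j\ket{\psi}=0$, so $R\cap S_j=\emptyset$ for all $j\neq j_0$. As $\{S_j\}$ is a partition, this forces $R\subseteq S_{j_0}$, and therefore $P_{j_0}\ket{\psi}=\ket{\psi}$. It follows that $\ket{\phi}=\bar c_{j_0}U_{j_0}\ket{\psi}$. The global phase $\bar c_{j_0}$ can be absorbed into the block unitaries: setting $U_{BI}=\bar c_{j_0}U_{j_0}$, which is again of the form \eqref{eq: block incoherent unitary} with $V_{\alpha}=\bar c_{j_0}V_{j_0,\alpha}$, gives $\ket{\phi}=U_{BI}\ket{\psi}$.

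The main obstacle is the step ruling out two active branches. One has to check carefully that the relation obtained is genuinely nontrivial. In particular, the vectors $V_{j,\alpha}\ket{\psi_\alpha}$ must not collapse against each other, since the families for $j$ and $k$ could share vectors when $V_{j,\alpha}=V_{k,\alpha}$. This is precisely what the hypothesis on the union family excludes, because linear independence of the union presupposes that its members are distinct nonzero vectors. Here $\alpha\in R$ ensures $\ket{\psi_\alpha}\neq0$, so the members are nonzero. One also needs $Q\neq\emptyset$, which holds because $\ket{\phi}$ is normalized.
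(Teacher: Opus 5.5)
Your proposal is correct and follows essentially the same route as the paper. Rank-one output forces $K_j\ket{\psi}=c_j\ket{\phi}$, projecting onto $\Pi_\beta$ and invoking the linear-independence hypothesis rules out two active branches, the single surviving branch yields $U_{BI}$, and the converse is realized by the one-Kraus PBIO $K_1=U_{BI}$ with $P_1=I$. The differences are cosmetic: the paper first uses that $\pi_j$ is a permutation to make $A_{j,\beta}$ a singleton, so the contradiction is just collinearity of two vectors, and it gets $P_{j_*}\ket{\psi}=\ket{\psi}$ from $\|P_{j_*}\ket{\psi}\|=1$ rather than from your partition argument.
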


\begin{proof}
We have 
$$
T(\ket{\psi}\bra{\psi}) = \sum_j K_j \ket{\psi} \bra{\psi} K_j^{\dagger}=\ket{\phi}\bra{\phi}.
$$
Since the right-hand side is rank one, every $K_j\ket{\psi}$ must be proportional to $\ket{\phi}$.\\
Therefore for each $j$, there exists $c_j \in \mathbb{C}$ such that
$$K_j \ket{\psi}=c_j \ket{\phi}=\sqrt{r_j} e^{i \theta_j} \ket{\phi}, \;\text{where}\;r_j \geq 0,\;\sum_j r_j=1.$$
Since $T$ is trace preserving, its Kraus operators satisfy
$$\sum_j K_j^\dagger K_j=I.$$
Consequently gives 
$$\sum_j |c_j|^2=1.$$
We now compute left hand side of above expression
\begin{align*}
    K_j \ket{\psi}&=U_j P_j \ket{\psi}\\
    &=U_j \left(\sum_{\gamma \in S_j} \Pi_{\gamma} \right) \left(\sum_\alpha \ket{\psi_{\alpha}}\right)\\
    &=U_j \left( \sum_{\gamma \in S_j } \sum_\alpha \Pi_\gamma \ket{\psi_\alpha}\right)\\
    &=U_j \left( \sum_{\gamma \in S_j } \sum_\alpha \Pi_\gamma \Pi_\alpha \ket{\psi}\right)\\
    &=U_j \left( \sum_{\gamma \in S_j } \sum_\alpha \delta_{\gamma \alpha} \Pi_\alpha \ket{\psi}\right)\\
    &= U_j \left(\sum_{\gamma \in S_j} \Pi_{\gamma} \ket{\psi}\right)\\
    &= \left( \sum_\alpha V_{j,\alpha} \Pi_\alpha \right) \left(\sum_{\gamma \in S_j} \Pi_{\gamma} \ket{\psi}\right) \\
    &=\sum_\alpha \sum_{\gamma \in S_j} V_{j, \alpha} \Pi_{\alpha} \Pi_{\gamma} \ket{\psi}\\
    &=\sum_\alpha \sum_{\gamma \in S_j} V_{j, \alpha} \delta_{\alpha \gamma} \Pi_{\gamma} \ket{\psi}\\
    &=\sum_{\gamma \in S_j} V_{j, \gamma} \ket{\psi_\gamma}
\end{align*}
Thus for every $j$, we have
\begin{equation}
\label{eq 1}
     K_j \ket{\psi}= \sum_{\gamma \in S_j} V_{j,\gamma} \ket{\psi_\gamma}=\sqrt{r_j} e^{i \theta_j} \ket{\phi}.
\end{equation}
Now from (\ref{eq 1}) we can write
\begin{equation*}
    \Pi_{\beta} K_j \ket{\psi} = \Pi_\beta \sum_{\gamma \in S_j} V_{j,\gamma} \ket{\psi_\gamma} =\sum_{\gamma \in S_j: \pi_j(\gamma)=\beta } V_{j,\gamma} \ket{\psi_\gamma}
\end{equation*}
Thus, for every $j$ and $\beta \in Q$
\begin{equation}
\label{eq 2}
    \sum_{\alpha \in S_j: \pi_j(\alpha)=\beta} V_{j, \alpha} \ket{\psi_\alpha}=\sqrt{r_j} e^{i \theta_j} \ket{\phi_\beta}.
\end{equation}
Equation $(\ref{eq 2})$ can be rewritten as 
\begin{equation}
    \label{eq 3}
    \sum_{\alpha \in A_{j, \beta}} V_{j, \alpha} \ket{\psi_\alpha}=\sqrt{r_j} e^{i \theta_j} \ket{\phi_\beta}.
\end{equation}
Since $\pi_j$ is a permutation and for fixed $j, \beta$, the set $A_{j, \beta}$ contains at most one element. Since $j$ is active [A Kraus index $j$ is called active if $K_j \ket{\psi} \neq 0.$], $K_j \ket{\psi} \neq 0$, hence $r_j=|c_j|^2 >0$. Since also $\beta \in Q$, we have $\ket{\phi_\beta} \neq 0$. Therefore the right hand side of the equation (\ref{eq 3}) is nonzero. Hence $A_{j, \beta}$ contains exactly one element.\\
If two different active indices $j \neq k$ exist, then for every $\beta \in Q$, there exist unique indices $\alpha_j(\beta)\in A_{j, \beta}$ and $\alpha_k(\beta) \in A_{k, \beta}$ such that
\begin{equation}
\label{eq 4}
    V_{j, \alpha_j(\beta)} \ket{\psi_{\alpha_j(\beta)}}=\sqrt{r_j} e^{i \theta_j} \ket{\phi_\beta},
\end{equation}
\begin{equation}
\label{eq 5}
    V_{k, \alpha_k(\beta)} \ket{\psi_{\alpha_k(\beta)}}=\sqrt{r_k} e^{i \theta_k} \ket{\phi_\beta}.
\end{equation}
Thus from (\ref{eq 4}) and (\ref{eq 5}) we can write 
\begin{align*}
    V_{j, \alpha_j(\beta)} \ket{\psi_{\alpha_j(\beta)}}&=\sqrt{r_j} e^{i \theta_j} \ket{\phi_\beta}\\
    &=\left( \frac{\sqrt{r_j} e^{i \theta_j}}{\sqrt{r_k} e^{i \theta_k}}\right) \left(\sqrt{r_k} e^{i \theta_k}\right) \ket{\phi_\beta}\\
    &=\left( \frac{\sqrt{r_j} e^{i \theta_j}}{\sqrt{r_k} e^{i \theta_k}}\right) V_{k, \alpha_k(\beta)} \ket{\psi_{\alpha_k(\beta)}}.\\
\end{align*}
Thus $V_{j, \alpha_j(\beta)} \ket{\psi_{\alpha_j(\beta)}}$ and $ V_{k, \alpha_k(\beta)} \ket{\psi_{\alpha_k(\beta)}}$ are collinear which contradicts the linear independence assumption. Hence only one active Kraus index exists.\\
Let $j_*$ denotes such an index. Then
\begin{equation}
\label{eq 6}
K_{j_*} \ket{\psi}=e^{i \theta_{j_*}} \ket{\phi}.
\end{equation}
so, 
$$||K_{j*} \ket{\psi}||=1=||P_{j*} \ket{\psi}||.$$
Since $P_{j*}$ is an orthogonal projector, the equality $$||P_{j_*}\ket{\psi}||=||\ket{\psi}||=1$$ 
implies $P_{j*} \ket{\psi}=\ket{\psi}.$
Thus from \eqref{eq 6} we can write
\begin{equation}
    \label{eq 7}
    U_{j_*} \ket{\psi}=e^{i \theta_{j_*}} \ket{\phi}\;\;\text{and}\;\;\ket{\phi}=e^{-i \theta_{j_*}}  U_{j_*} \ket{\psi}.
\end{equation}
We set $U_{BI}=e^{-i \theta_{j_*}}  U_{j_*} $ and consequently we get $\ket{\phi}=U_{BI} \ket{\psi}.$\\
Conversely, we assume there exists a block incoherent unitary $U_{BI}$ such that 
$$\ket{\phi}=U_{BI} \ket{\psi}.$$
We define a quantum channel by
$$T(\rho)= U_{BI} \rho U_{BI}^\dagger.$$
It is clearly CPTP because it is a unitary map.\\
It remains to show that this is a PBIO. Since $U_{BI}$ is block incoherent unitary, it has the form 
$$U_{BI}=\sum_\alpha V_\alpha \Pi_\alpha,$$
where $$V_\alpha : \mathcal{H_\alpha} \rightarrow \mathcal{H_{\pi(\alpha)}}$$
is a unitary map between blocks and $\pi$ is a permutation of block labels.\\
Now we choose a single Kraus operator
$$K_1=U_{BI}.$$
This has the PBIO form 
$$K_1= U_1 P_1,$$
$$\text{where, }U_1=U_{BI}\;\text{and} \;\;P_1=I=\sum_{\alpha=1}^m \Pi_\alpha.$$
Equivalently, we take the subset $S_1=\{1,2,...,m\}.$
Then $P_1=\sum_{\alpha \in S_1} \Pi_\alpha=I.$
$$\text{Therefore,   } K_1=U_{BI}I=U_{BI}.$$
Hence $T$ is a PBIO channel and satisfies $T(\ket{\psi}\bra{\psi})=\ket{\phi}\bra{\phi}$. This completes the proof.
\end{proof}

Theorem \ref{thm:pbio} shows that, under the nondegeneracy assumption, a PBIO channel can convert $\ket{\psi}$ into $\ket{\phi}$ only when $\ket{\phi}$ is obtained from $\ket{\psi}$ by a block incoherent unitary. We call this a nondegeneracy assumption because it rules out linear dependence among the active branch contributions to each output block, preventing different Kraus branches from producing the same block direction in a degenerate way. Equivalently this transformation reduces to a deterministic reshuffling of the block subspaces combined with internal rotations within each block.\\
This assumption has a clear physical meaning. It prevents two distinct Kraus branches of the channel from producing the same output block component through parallel contributions. Consequently, multiple measurement branches cannot combine to generate the target pure state. The transformation is therefore implemented by a single Kraus operator. Hence, for this deterministic pure state conversion, PBIO cannot change the intrinsic block coherence structure of the state. It can only redistribute it through a symmetry of the block decomposition.

\begin{corollary}
    Without nondegeneracy condition,$$ T(|\psi\rangle\langle\psi|)=|\phi\rangle\langle\phi|.$$ holds if and only if $$\sum_{\alpha \in A_{j, \beta}}V_{j, \alpha} \ket{\psi_\alpha}=\sqrt{r_j} e^{i \theta_j} \ket{\phi_\beta}= c_j \ket{\phi_\beta}$$ holds for every $j$ and every $\beta$
\label{Corollary 1}
\end{corollary}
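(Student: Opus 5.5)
The plan is to reuse the first half of the proof of Theorem \ref{thm:pbio} verbatim and simply stop before the nondegeneracy assumption is invoked. The corollary is read with $c_j=\sqrt{r_j}e^{i\theta_j}$ a family of complex numbers satisfying $\sum_j |c_j|^2=1$. The condition is meant for every active branch $j$; for inactive branches it holds with $c_j=0$. It is also meant for every block label $\beta$, not only $\beta\in Q$: for $\beta\notin Q$ the right-hand side vanishes, so the condition says the left-hand sum must vanish as well.

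\emph{Necessity.} I would argue as follows.
\begin{itemize}
\item Assume $T(\ket{\psi}\bra{\psi})=\ket{\phi}\bra{\phi}$. Since the output is rank one and each $K_j\ket{\psi}\bra{\psi}K_j^\dagger$ is positive, every $K_j\ket{\psi}$ lies in the span of $\ket{\phi}$. Hence $K_j\ket{\psi}=c_j\ket{\phi}$.
\item Trace preservation gives $\sum_j|c_j|^2=\sum_j\bra{\psi}K_j^\dagger K_j\ket{\psi}=1$.
\item The computation already carried out in the proof of Theorem \ref{thm:pbio} gives $K_j\ket{\psi}=\sum_{\gamma\in S_j}V_{j,\gamma}\ket{\psi_\gamma}$.
\item Apply $\Pi_\beta$ to this identity. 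Since $V_{j,\gamma}$ maps $\mathcal{H}_\gamma$ into $\mathcal{H}_{\pi_j(\gamma)}$, only the terms with $\pi_j(\gamma)=\beta$ survive. Because $\ket{\psi_\gamma}=0$ for $\gamma\notin R$, we may restrict to $\gamma\in R$, so the surviving index set is exactly $A_{j,\beta}$.
\item This yields $\sum_{\alpha\in A_{j,\beta}}V_{j,\alpha}\ket{\psi_\alpha}=c_j\ket{\phi_\beta}$ for every $j$ and $\beta$, which is Eq.~(\ref{eq 3}) extended to all $\beta$.
\end{itemize}

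\emph{Sufficiency.} Suppose the blockwise identities hold with $\sum_j|c_j|^2=1$.
\begin{itemize}
\item Summing over $\beta$ and using $\sum_\beta\Pi_\beta=I$, we reconstruct $K_j\ket{\psi}=\sum_\beta\Pi_\beta K_j\ket{\psi}=c_j\sum_\beta\ket{\phi_\beta}=c_j\ket{\phi}$.
\item Then $T(\ket{\psi}\bra{\psi})=\sum_j|c_j|^2\ket{\phi}\bra{\phi}=\ket{\phi}\bra{\phi}$.
\end{itemize}

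The only delicate point is the normalization of the $c_j$. In the statement, $\sum_j r_j=1$ is implicit, and I would point out that it is in fact automatic. Since $T$ is trace preserving, $\sum_j\|K_j\ket{\psi}\|^2=1$, while $\|K_j\ket{\psi}\|^2=|c_j|^2$ by the reconstructed identity, so $\sum_j|c_j|^2=1$ follows for free.

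The main obstacle is therefore not technical but interpretive: making precise that the common factor $c_j$ must be the same across all output blocks $\beta$. This is what the reconstruction step uses, and I would stress it in the write-up. No linear-independence argument is needed, because we never try to collapse the channel to a single active Kraus operator.
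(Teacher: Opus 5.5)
Your proposal is correct and follows the paper's proof. Necessity is Eq.~(\ref{eq 3}) from the first half of the proof of Theorem~\ref{thm:pbio}. Sufficiency sums the blockwise identities over $\beta$ to recover $K_j\ket{\psi}=c_j\ket{\phi}$ and then uses $\sum_j|c_j|^2=1$. Two of your points are small tightenings the paper leaves implicit: extending Eq.~(\ref{eq 3}) to $\beta\notin Q$, which the paper's sufficiency direction needs when it sums over all $\beta$, and deriving the normalization from trace preservation.
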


\begin{proof}
    The necessary part condition follows from equation ($\ref{eq 3}$).\\
    For sufficiency, we assume that the given condition $$ \sum_{\alpha \in A_{j, \beta}}V_{j, \alpha} \ket{\psi_\alpha}=\sqrt{r_j} e^{i \theta_j} \ket{\phi_\beta},$$ 
    holds for every $j$ and for every $\beta$.\\
    Summing over all $\beta$, and using that $\{ A_{j, \beta}\}_\beta$ partitions $R \cap S_j$, we get
    \begin{equation}
    \label{eq 8}
        \sum_{\beta=1}^m \sum_{\alpha \in A_{j, \beta}} V_{j, \alpha} \ket{\psi_{\alpha}}=\sum_{\alpha \in {R \cap S_j}}  V_{j, \alpha} \ket{\psi_{\alpha}}=\sum_{\beta=1}^m \sqrt{r_j} e^{i \theta_j} \ket{\phi_\beta} 
    \end{equation}
    Now $$\sum_{\alpha \in S_j}  V_{j, \alpha} \ket{\psi_{\alpha}}=\sum_{\alpha \in R \cap S_j} V_{j, \alpha} \ket{\psi_{\alpha}}+\sum_{\alpha \in S_j\setminus R} V_{j, \alpha} \ket{\psi_{\alpha}}.$$
  Since the remaining terms with $\alpha \in S_j\setminus R$ vanish, it follows from \eqref{eq 8} that
    
    $$\sum_{\alpha \in S_j} V_{j, \alpha} \ket{\psi_{\alpha}}=\sum_{\beta=1}^m \sqrt{r_j} e^{i \theta_j} \ket{\phi_\beta} = c_j \sum_{\beta = 1}^m \ket{\phi_\beta}
    $$
    Hence
    $$\sum_{\alpha \in S_j} V_{j, \alpha} \ket{\psi_{\alpha}}=c_j \ket{\phi}$$
    From equation ($\ref{eq 1}$) we can write the above equation as
    $$K_j \ket{\psi}=c_j \ket{\phi}$$
    Thus we have
    $$T(\ket{\psi}\bra{\psi})=\sum_j K_j \ket{\psi} \bra{\psi} K_j^\dagger=\sum_j |c_j|^2 \ket{\phi} \bra{\phi} $$
    But $\sum_j |c_j|^2=1$ and this completes the proof.
\end{proof}

    Corollary \ref{Corollary 1} drops the nondegeneracy assumption and provides the most general condition for pure state conversion under PBIO. The condition states that in each branch $j$, the rotated block components of $\ket{\psi}$ must sum to a vector proportional to the target block component $\ket{\phi_\beta}$, with a proportionality constant independent of the output block $\beta$. Thus each Kraus branch must independently reproduce the correct block structure of the target state. In this Kraus representation, different active branches contribute with weights $|c_j|^2$ and these weights sum to one.

\begin{corollary}
    If all block projectors $\Pi_\alpha$ have rank one, then the pure state transformation condition under PBIO reduces exactly to the pure state transformation condition under PIO.
\label{Corollary 2}
\end{corollary}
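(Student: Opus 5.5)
Corollary \ref{Corollary 2} is obtained by specialising the PBIO formalism, and Corollary \ref{Corollary 1}, to rank-one projectors $\Pi_\alpha=\ket{\alpha}\bra{\alpha}$. The resulting condition is then compared with the known PIO pure-state criterion. Recall that PIO Kraus operators have the form $K_j=U_jP_j$. Here $U_j=\sum_\alpha e^{i\theta_{j,\alpha}}\ket{\pi_j(\alpha)}\bra{\alpha}$ is an incoherent unitary (a permutation matrix times a diagonal phase), and $P_j=\sum_{\alpha\in S_j}\ket{\alpha}\bra{\alpha}$ for a partition $\{S_j\}$. For PIO, $\ket{\psi}\to\ket{\phi}$ is possible iff $\ket{\phi}$ is obtained from $\ket{\psi}$ by an incoherent unitary, equivalently iff the moduli vectors $(|\psi_\alpha|)$ and $(|\phi_\beta|)$ agree up to a permutation.

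\textbf{Step 1: the free operations coincide.} When every $\mathcal{H}_\alpha$ is one-dimensional, each block map $V_{j,\alpha}:\mathcal{H}_\alpha\to\mathcal{H}_{\pi_j(\alpha)}$ is a unitary between one-dimensional spaces. Hence $V_{j,\alpha}=e^{i\theta_{j,\alpha}}\ket{\pi_j(\alpha)}\bra{\alpha}$. The dimension constraint on $\pi_j$ is then vacuous, so $\pi_j$ may be an arbitrary permutation. Therefore $U_j=\sum_\alpha V_{j,\alpha}\Pi_\alpha$ is exactly an incoherent unitary, and $P_j$ is exactly a PIO projector. So PBIO and PIO have the same Kraus sets in the rank-one limit.

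\textbf{Step 2: specialise the transformation conditions.} Write $\ket{\psi_\alpha}=\psi_\alpha\ket{\alpha}$ and $\ket{\phi_\beta}=\phi_\beta\ket{\beta}$. Each $A_{j,\beta}$ contains at most one element, namely $\pi_j^{-1}(\beta)$ if it lies in $R\cap S_j$. The condition of Corollary \ref{Corollary 1} then becomes
\begin{equation*}
e^{i\theta_{j,\alpha}}\psi_\alpha=c_j\,\phi_{\pi_j(\alpha)}\quad (\alpha\in S_j), \qquad \phi_\beta=0 \text{ for } \beta\notin\pi_j(R\cap S_j)\text{ when } c_j\neq 0.
\end{equation*}
This is precisely the branchwise condition $K_j\ket{\psi}=c_j\ket{\phi}$ for PIO Kraus operators.

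\textbf{Step 3: recover the PIO criterion.} I would show that this condition forces a single effective unitary. Take any active $j$. Its branch $K_j\ket{\psi}=c_j\ket{\phi}$ must reproduce every $\beta\in Q$, so $\pi_j$ restricted to $R\cap S_j$ is onto $Q$. This gives $|Q|\le|R\cap S_j|$. Conversely, each $\alpha\in R\cap S_j$ satisfies $\psi_\alpha\neq 0$, so its image $\phi_{\pi_j(\alpha)}$ is nonzero and lies in $Q$. Hence $\pi_j$ is a bijection $R\cap S_j\to Q$, and $|\psi_\alpha|=|c_j||\phi_{\pi_j(\alpha)}|$. Summing squares gives $\|P_j\psi\|^2=|c_j|^2$. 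Summing over $j$ gives $1=\sum_j|c_j|^2$, consistent with the norm.

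The main obstacle is excluding several active branches, each carrying a scaled copy. I would handle it by counting the support. The sets $R\cap S_j$ for distinct active $j$ are disjoint, and each is in bijection with $Q$. Their union is all of $R$, since inactive branches have $P_j\psi=0$ and so $R\cap S_j=\emptyset$. Thus $|R|=(\#\text{active})\,|Q|$, and $|\psi_\alpha|$ equals $|c_j||\phi_{\pi_j(\alpha)}|$ on each piece. This agrees with the known PIO characterization, which in full generality allows such splittings: $\ket{\psi}$ decomposes into scaled permuted copies of $\ket{\phi}$. So the reduction is exact either way, and no additional argument is needed beyond matching the two conditions.

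Finally, under the nondegeneracy hypothesis of Theorem \ref{thm:pbio}, the vectors $V_{j,\alpha}\ket{\psi_\alpha}$ and $V_{k,\alpha'}\ket{\psi_{\alpha'}}$ landing in the same one-dimensional block are automatically collinear. So only one branch is active, and the criterion collapses to $\ket{\phi}=U\ket{\psi}$ with $U$ an incoherent unitary. That is the standard PIO statement.
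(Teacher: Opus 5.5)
Your Step 1 is exactly the paper's argument. In the rank-one case each $V_{j,\alpha}$ is a phase times $\ket{\pi_j(\alpha)}\bra{\alpha}$, so $U_j$ is an incoherent unitary and $P_j$ is a PIO projector. Your remark that the dimension constraint on $\pi_j$ becomes vacuous also gives the reverse inclusion (every PIO Kraus family is a PBIO one), which the paper leaves implicit. This identification of the two channel classes alone proves the corollary.

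One correction to the surrounding material. The ``known PIO criterion'' you state at the outset, and again call ``the standard PIO statement'' at the end, is false. PIO can convert $\ket{\psi}=\tfrac{1}{2}(\ket{0}+\ket{1}+\ket{2}+\ket{3})$ into $\ket{\phi}=\tfrac{1}{\sqrt{2}}(\ket{0}+\ket{1})$ with the following Kraus data:
\begin{itemize}
\item[(i)] $P_1=\ket{0}\bra{0}+\ket{1}\bra{1}$ and $U_1=I$;
\item[(ii)] $P_2=\ket{2}\bra{2}+\ket{3}\bra{3}$ and $U_2$ the permutation $0\leftrightarrow 2$, $1\leftrightarrow 3$.
\end{itemize}
Both branches give $\tfrac{1}{\sqrt{2}}\ket{\phi}$, yet the moduli vectors are not permutations of each other, so no incoherent unitary relates the two states.

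Your own Step 3 already shows this. The ``obstacle'' of several active branches cannot be excluded, and it need not be. The correct PIO pure-state condition is the splitting one you derive there: $R$ is partitioned into pieces $R\cap S_j$, each mapped bijectively onto $Q$ by $\pi_j$ with $|\psi_\alpha|=|c_j|\,|\phi_{\pi_j(\alpha)}|$. The single-unitary form is only the special case singled out by the nondegeneracy hypothesis of Theorem \ref{thm:pbio}. In the rank-one setting that hypothesis simply restricts attention to channels with one active branch.

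Your conclusion rests on Step 1, not on the misquoted criterion, so the proof stands once those claims are removed.
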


\begin{proof}
    When every $\Pi_\alpha$ has rank one, both $\mathcal{H}_\alpha$ and $\mathcal{H}_{\pi_j(\alpha)}$ are one dimensional. Now any block incoherent unitary $U_j$, has the form 
    $$U_j= \sum_\alpha V_{j, \alpha} \Pi_\alpha$$
    where
    $$V_{j,\alpha}: \mathcal{H_\alpha} \rightarrow \mathcal{H}_{\pi_j(\alpha)}$$ is a unitary but both $\mathcal{H}_\alpha$ and $\mathcal{H}_{\pi_j(\alpha)}$ are one dimensional. Therefore $V_{j, \alpha}$ is only a phase map. Thus, there exists $\omega_{j, \alpha} \in \mathbb{R}$ such that 
    $$V_{j, \alpha} \ket{\alpha}=e^{i \omega_{j, \alpha}} \ket{\pi_j (\alpha)}$$
    Consequently,
    $$U_j \ket{\alpha}= e^{i \omega_{j, \alpha}} \ket{\pi_j (\alpha)}$$
    Thus each $U_j$ is an incoherent unitary.\\
    Also, the projector $$P_j= \sum_{\alpha \in S_j} \Pi_\alpha$$ is simply the projector onto a subset of the incoherent basis.\\
    Hence each PBIO Kraus operator $K_j=U_jP_j$ has exactly the standard PIO form: an incoherent unitary followed by a projector onto a subset of basis states. Thus every PBIO channel is a PIO channel in the rank one block case and this completes the proof.
    \end{proof}

The above Corollary shows that PBIO reduces to PIO when every block subspace is one dimensional. In this case, there is no internal structure within any block. Each block is spanned by a single basis vector and the intra block unitaries of PBIO can only multiply that vector by a phase. The extra generality of PBIO over PIO is therefore entirely due to multi dimensional block structure. Equivalently PBIO can perform operations genuinely beyond the reach of PIO. This corollary confirms that the standard coherence resource theory based on PIO is recovered as a special case of block coherence theory in trivial block limit.

\section{Block Dephasing Covariant Operation}
\label{sec:BDCO}
A quantum channel $\mathcal{E}:\mathcal{L(\mathcal{H})} \rightarrow \mathcal{L(\mathcal{H})}$ is called a block dephasing covariant operation (BDCO) if it is CPTP and satisfies 
$$\Delta \circ \mathcal{E}=\mathcal{E} \circ \Delta$$
For pure states $\ket{\psi},\ket{\phi} \in \mathcal{H}$ we set
$$\rho_\psi=\ket{\psi}\bra{\psi}, \qquad \rho_\phi=\ket{\phi}\bra{\phi}.$$
We denote their block components and components of block vectors by
\begin{equation*}
    \ket{\psi_\alpha}=\Pi_\alpha \ket{\psi}, \qquad p_\alpha=||\ket{\psi_\alpha}||^2=Tr(\Pi_\alpha \ket{\psi}\bra{\psi}),
\end{equation*}
\begin{equation*}
    \ket{\phi_\beta}=\Pi_\beta \ket{\phi}, \qquad q_\beta=||\ket{\phi_\beta}||^2=Tr(\Pi_\beta \ket{\phi}\bra{\phi})
\end{equation*}
The vectors $\tilde{p}=(p_\alpha)_\alpha$ and $\tilde{q}=(q_\beta)_\beta$ are block probability vectors.
We say that $\tilde{p}$ is majorized by $\tilde{q}$, written as $\tilde{p} \prec \tilde{q}$, if
$$\sum_{k=1}^{r} p_k^{\downarrow} \leq \sum_{k=1}^{r} q_k^{\downarrow}, \qquad r=1,2,...,m-1,$$
$$\sum_{k=1}^m p_k^{\downarrow}=\sum_{k=1}^m q_k^{\downarrow}$$ 

For each $\beta, \alpha$ whenever $q_\beta >0, p_\alpha >0$ we define the normalized vectors
\begin{equation}
    |\hat\phi_\beta\rangle=\frac{\ket{\phi_\beta}}{\sqrt{q_\beta}},
    \qquad
    |\hat\psi_\alpha\rangle=\frac{\ket{\psi_\alpha}}{\sqrt{p_\alpha}}
\label{eq 9}
\end{equation}
For the input and output pure states, $p_\alpha=0$ gives
$$\Pi_\alpha\ket{\psi}=\ket{\psi_\alpha}=0.$$
Therefore the $\alpha$-th input block has no contribution to the block decomposition of $\ket{\psi}$.\\
Similarly, if $q_\beta=0$, then
$$\Pi_\beta\ket{\phi}=\ket{\phi_\beta}=0,$$
hence $\beta$-th output block has no contribution to the block decomposition of $\ket{\phi}$.\\
Therefore, all expressions involving normalized block vectors will be understood only on $p_\alpha > 0$ and $q_\beta >0$. In other words, throughout our calculation in the next Theorem we will assume $\alpha \in R$ and $\beta \in Q$, where the set $R$ and $Q$ were defined in section \ref{sec:Preliminary}.
\begin{theorem}
  A given initial state $\ket{\psi}=\sum_\alpha \ket{\psi_\alpha}$ can be transformed deterministically to a given target state $\ket{\phi}=\sum_\beta \ket{\phi_\beta}$ by a BDCO channel if and only if the majorization relation
  $$\tilde{p} \prec \tilde{q}$$
  holds.
  \label{BDCO Theorem}
\end{theorem}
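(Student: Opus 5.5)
The plan is to prove the two directions separately: sufficiency by building an explicit channel from the rank-one (SIO) construction, and necessity by extracting a doubly stochastic matrix relating $\tilde p$ and $\tilde q$ from the covariance condition $\Delta\circ\mathcal E=\mathcal E\circ\Delta$ together with purity of the target. The necessity step is the one I have not closed. I intend to show that every SBIO channel commutes with $\Delta$, so that the SBIO construction is automatically a BDCO; I have not carried out this check yet.

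\textbf{Sufficiency.} Assume $\tilde p\prec\tilde q$, and reduce to the rank-one problem. In each block $\alpha\in R$ the state singles out the unit vector $|\hat\psi_\alpha\rangle$ of (\ref{eq 9}); in each block $\beta\in Q$ the target singles out $|\hat\phi_\beta\rangle$. I first consider the rank-one states $\sum_\alpha\sqrt{p_\alpha}\ket{\alpha}$ and $\sum_\beta\sqrt{q_\beta}\ket{\beta}$ on an auxiliary $m$-dimensional space. Since $\tilde p\prec\tilde q$, the known pure-state criterion for SIO (Winter--Yang, built on Nielsen's theorem and Birkhoff's decomposition of the doubly stochastic matrix from the majorization theorem recalled earlier) gives Kraus operators $k_j=\sum_\alpha c^{(j)}_{\alpha}\ket{\pi_j(\alpha)}\bra{\alpha}$, with permutations $\pi_j$, mapping the first state to the second.

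I then lift each $k_j$ to
\[
K_j=\sum_{\alpha}c^{(j)}_{\alpha}\,|\hat\phi_{\pi_j(\alpha)}\rangle\langle\hat\psi_\alpha|\,\Pi_\alpha .
\]
This has the SBIO form (\ref{eq:SBIO kraus form}), and $\sum_j K_j\ket{\psi}\bra{\psi}K_j^\dagger=\ket{\phi}\bra{\phi}$ follows directly from the rank-one identity. Where $|\hat\phi_{\pi_j(\alpha)}\rangle$ is undefined, either $q_{\pi_j(\alpha)}=0$ forces $c^{(j)}_\alpha=0$, or an arbitrary unit vector in that block can be used.

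The completion to a trace-preserving map is the delicate part. The $K_j$ only satisfy $\sum_j K_j^\dagger K_j=\sum_\alpha|\hat\psi_\alpha\rangle\langle\hat\psi_\alpha|$. I add the block-diagonal Kraus operator
\[
K_0=\sum_\alpha\bigl(\Pi_\alpha-|\hat\psi_\alpha\rangle\langle\hat\psi_\alpha|\bigr),
\]
together with $\Pi_\alpha$ for $\alpha\notin R$. These annihilate $\ket{\psi}$, so the output is unchanged, and they are themselves SBIO with the identity permutation. Finally I check that every SBIO channel commutes with $\Delta$: each $K_j$ maps block $\alpha$ into block $\pi_j(\alpha)$ with $\pi_j$ a bijection, so $\Pi_\beta K_j\rho K_j^\dagger\Pi_\beta=K_j\Pi_{\pi_j^{-1}(\beta)}\rho\,\Pi_{\pi_j^{-1}(\beta)}K_j^\dagger$, and summing over $\beta$ gives $\Delta\circ\mathcal E=\mathcal E\circ\Delta$. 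This would give the explicit BDCO promised in the abstract.

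\textbf{Necessity.} Suppose $\mathcal E(\rho_\psi)=\rho_\phi$ for a BDCO. Covariance gives
\[
\Delta(\rho_\phi)=\mathcal E(\Delta\rho_\psi)=\sum_\alpha p_\alpha\,\mathcal E(|\hat\psi_\alpha\rangle\langle\hat\psi_\alpha|).
\]
Define $T_{\beta\alpha}=\mathrm{Tr}\bigl[\Pi_\beta\,\mathcal E(|\hat\psi_\alpha\rangle\langle\hat\psi_\alpha|)\bigr]$. This matrix is column stochastic and satisfies $\tilde q=T\tilde p$. By the majorization theorem recalled earlier, it then suffices to show that $T$ can be replaced by a doubly stochastic matrix, or directly that for every $k$ the $k$ smallest $q_\beta$ sum to at most the $k$ smallest $p_\alpha$.

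For the direct route I pass to the dual map $\mathcal E^\dagger$, which is unital and also commutes with $\Delta$. For any set $B$ of output blocks, $G_B=\mathcal E^\dagger(\Pi_B)$ satisfies $0\le G_B\le I$, is block diagonal, and $\sum_{\beta\in B}q_\beta=\langle\psi|G_B|\psi\rangle$.

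\textbf{Main obstacle.} Column stochasticity alone does not give majorization, so purity of $\rho_\phi$ must be used. The fact I would exploit first is $\langle\psi|\mathcal E^\dagger(\rho_\phi)|\psi\rangle=1$. This makes $\ket{\psi}$ an eigenvector of eigenvalue $1$ of $\mathcal E^\dagger(\rho_\phi)\le I$, while $\Delta\mathcal E^\dagger(\rho_\phi)=\mathcal E^\dagger(\Delta\rho_\phi)=\sum_\beta q_\beta\,\mathcal E^\dagger(|\hat\phi_\beta\rangle\langle\hat\phi_\beta|)$.

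Writing $\ket\psi=\sum_\alpha\sqrt{p_\alpha}|\hat\psi_\alpha\rangle$ and applying Cauchy--Schwarz across blocks, I would try to show that the matrix $S_{\alpha\beta}=\langle\hat\psi_\alpha|\mathcal E^\dagger(|\hat\phi_\beta\rangle\langle\hat\phi_\beta|)|\hat\psi_\alpha\rangle$ yields $\tilde p=S\tilde q$ with $S$ doubly substochastic. Equal total weights would then upgrade this to $\tilde p\prec\tilde q$ via the weak-majorization theorem recalled earlier. Row and column bounds on $S$ should come from $\mathcal E^\dagger(I)=I$ and $\mathrm{Tr}\,\mathcal E(|\hat\psi_\alpha\rangle\langle\hat\psi_\alpha|)=1$. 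The equality $\tilde p=S\tilde q$ is where the eigenvalue-$1$ condition must enter, and it is the step I expect to require the most care. If it resists, my fallback is to adapt the rank-one DIO necessity argument of Chitambar--Gour block by block.
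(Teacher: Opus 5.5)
Your sufficiency argument is correct and is in substance the paper's. The paper's operators $A_l=\sqrt{\lambda_l}\sum_\alpha\sqrt{q_{\pi_l(\alpha)}/p_\alpha}\,|\hat\phi_{\pi_l(\alpha)}\rangle\langle\hat\psi_\alpha|$ are exactly your lift of the Birkhoff-based rank-one SIO Kraus operators. The paper completes them with block-to-block operators that annihilate $\ket{\psi}$; your single block-diagonal completion $I-\sum_{\alpha\in R}|\hat\psi_\alpha\rangle\langle\hat\psi_\alpha|$ works equally well. Your check that a channel with SBIO-form Kraus operators commutes with $\Delta$ is also right.

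The necessity direction, however, has a genuine gap, and the plan you sketch cannot be completed as written.

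\begin{itemize}
\item Your column-stochastic $T$ satisfies $\tilde q=T\tilde p$. Replacing it by a doubly stochastic matrix would give $\tilde q\prec\tilde p$, which is the reverse relation.
\item Your matrix $S_{\alpha\beta}=\langle\hat\psi_\alpha|\mathcal E^\dagger(|\hat\phi_\beta\rangle\langle\hat\phi_\beta|)|\hat\psi_\alpha\rangle$ is wrongly normalized. In general $\tilde p\neq S\tilde q$, and $S$ is not doubly substochastic.
\end{itemize}

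Here is a counterexample to the second point. Take two blocks, $\ket{\psi}=(\ket{e_1}+\ket{e_2})/\sqrt{2}$ with $\ket{e_\alpha}\in\mathcal H_\alpha$, $\ket{\phi}=\ket{e_1}$, and the replacement channel $\mathcal E(X)=\mathrm{Tr}(X)\ket{e_1}\bra{e_1}$. This channel is BDCO and achieves the conversion. Then $\mathcal E^\dagger(Y)=\langle e_1|Y|e_1\rangle I$, so $S_{11}=S_{21}=1$. Hence $(S\tilde q)_\alpha=1\neq\tfrac12=p_\alpha$, and the first column of $S$ sums to $2$. In this example, $\tilde q=D\tilde p$ is also impossible for any doubly stochastic $D$, since $\tilde p$ is uniform.

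In fact, once one knows that $\Pi_\beta\mathcal E(\cdot)\Pi_\beta$ is supported on $|\hat\phi_\beta\rangle$, $S$ is just the transpose of your $T$. So it encodes nothing beyond $\tilde q=T\tilde p$. The eigenvalue-one observation is never turned into an inequality, and the Chitambar--Gour fallback is not an argument.

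What is missing is the Kraus-level use of purity on which the paper's proof rests. The correct matrix is $T_{\alpha\beta}=q_\beta^{-1}\mathrm{Tr}[\Pi_\beta\mathcal E(\ket{\psi_\alpha}\bra{\psi_\alpha})]=\tfrac{p_\alpha}{q_\beta}S_{\alpha\beta}$ for $\beta\in Q$, and zero otherwise. For this matrix, $\tilde p=T\tilde q$ and unit column sums on $Q$ both follow from trace preservation and covariance. The entire content is therefore the row bound $\sum_\beta T_{\alpha\beta}\le 1$. The paper obtains it in five steps.

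\begin{enumerate}
\item Purity gives $K_l\ket{\psi}=c_l\ket{\phi}$ with $\sum_l|c_l|^2=1$.
\item Because $\Pi_\beta\mathcal E(\Delta\rho_\psi)\Pi_\beta=\ket{\phi_\beta}\bra{\phi_\beta}$ is rank one, $\Pi_\beta K_l\ket{\psi_\alpha}=a_l(\beta,\alpha)|\hat\phi_\beta\rangle$.
\item Covariance applied to $\ket{\psi_\alpha}\bra{\psi_\gamma}$ with $\alpha\neq\gamma$, and to $\ket{\psi_\alpha}\bra{\psi_\alpha}$, makes the vectors $a_{\beta\alpha}=(a_l(\beta,\alpha))_l$ orthogonal in $\alpha$ for fixed $\beta$, and orthogonal in $\beta$ for fixed $\alpha$.
\item Summing over $\alpha$ gives $\sum_\alpha a_{\beta\alpha}=\sqrt{q_\beta}\,c$. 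Pairing this with $a_{\beta\alpha}$ gives $\|a_{\beta\alpha}\|^2=q_\beta|\langle\hat a_{\beta\alpha},c\rangle|^2$, that is, $T_{\alpha\beta}=|\langle\hat a_{\beta\alpha},c\rangle|^2$.
\item Bessel's inequality for the orthonormal family $\{\hat a_{\beta\alpha}\}_\beta$ against the unit vector $c$ yields the row bound. The weak-majorization theorem, together with equal total weights, then gives $\tilde p\prec\tilde q$.
\end{enumerate}
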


Since the output is a rank one pure state, every active Kraus branch must satisfy
$$K_l \ket{\psi}=c_l \ket{\phi}.$$
Next, we consider the blockwise image $\Pi_\beta K_l \ket{\psi_\alpha},$ namely the part of $K _l \ket{\psi_\alpha}$ lying in $\mathcal{H}_\beta.$ The BDCO condition then imposes orthogonality constraints among these blockwise images. Together with the rank one output condition, this implies that every nonzero blockwise image is parallel to the corresponding target block component $\ket{\phi_\beta}.$ These relations generate a doubly substochastic matrix $T$ satisfying
$$\tilde{p}=T \tilde{q}.$$
Hence $\tilde{p} \prec_w \tilde{q}$. Since both $\tilde{p}$ and $\tilde{q}$ are block probability vectors, weak majorization becomes ordinary majorization. Therefore,
$$\tilde{p} \prec \tilde{q}.$$
Conversely, we assume that
$$\tilde{p} \prec \tilde{q}.$$
By the majorization theorem, there exists a doubly stochastic matrix $T$ such that
$$\tilde{p}= T \tilde{q}.$$
Using the Birkhoff-von Neumann theorem, we construct Kraus operators corresponding to permutation branches. These Kraus operators send the active input block components of $\ket{\psi}$ to the appropriate output block components of $\ket{\phi},$ so that each active Kraus branch maps $\ket{\psi}$ to a scalar multiple of $\ket{\phi}.$ Additional Kraus operators are added only to complete trace preserving condition of the map $\mathcal{E}$. These auxiliary operators vanish on $\ket{\psi}$, so they do not affect the desired transformation. The constructed map is CPTP, satisfies the BDCO condition and also satisfy 
$$\mathcal{E}(\ket{\psi}\bra{\psi})=\ket{\phi}\bra{\phi}.$$
The detailed proof is given in Appendix \ref{Proof of BDCO Theorem}.

The above theorem shows that pure state transformation under BDCO is completely governed by the majorization relation between the block vectors of the input and output states. The relevant quantifiers are not the individual basis probabilities, but the block probability vectors of the input and output states,
$$p_\alpha=||\ket{\psi_\alpha}||^2=Tr(\Pi_\alpha \ket{\psi}\bra{\psi}),$$
$$q_\beta=||\ket{\phi_\beta}||^2=Tr(\Pi_\beta \ket{\phi}\bra{\phi}).$$
Thus BDCO does not use coherence inside each block. It only constrains how the total probability weight is distributed among the blocks.\\
The condition
$$\tilde{p} \prec \tilde{q},$$
means that the input block distribution $\tilde{p}$ is more spread out than the output block distribution $\tilde{q}$. Therefore, a BDCO can transform a pure state only into another pure state whose block weights are at least as concentrated as those of the input. Thus BDCO cannot create additional block coherence. It can only degrade or concentrate the block level superposition structure already present in the input.\\
This gives a direct block coherence analogue of the majorization criterion for pure state transformations under DIO. In the rank one case, each block contains only one basis vector and the block vector reduces to the usual diagonal probability vector. Hence above theorem reduces to the standard DIO pure state majorization condition. For higher rank blocks, the theorem is genuinely block structural.

\begin{corollary}
    For pure states $\ket{\psi}$ and $\ket{\phi} \in \mathcal{H}$, a transformation under SBIO is possible if and only if the block probability vector $\tilde{p}$ of $\ket{\psi}$ is majorized by the block probability vector $\tilde{q}$ of $\ket{\phi},\;\text{i.e.}\; \tilde{p} \prec \tilde{q}.$
\label{SBIO corollary}
\end{corollary}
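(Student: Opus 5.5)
Both directions of Corollary \ref{SBIO corollary} will be obtained from Theorem \ref{BDCO Theorem}, sandwiching SBIO between BDCO and the explicit construction used there.

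\emph{Necessity.} We first check that every SBIO channel is a BDCO. An SBIO Kraus operator satisfies $\Delta(K_j\rho K_j^\dagger)=K_j\Delta(\rho)K_j^\dagger$ for all $\rho$. Summing over $j$ gives $\Delta\circ\Lambda=\Lambda\circ\Delta$, which is exactly the BDCO condition. Alternatively, one can start from the form $K_j=\sum_l\Pi_{\pi_j(l)}C_j\Pi_l$ and compute both sides block by block, using that $\pi_j$ is a permutation so distinct input blocks land in distinct output blocks. Hence if $\ket{\psi}$ is converted to $\ket{\phi}$ by an SBIO, it is converted by a BDCO, and Theorem \ref{BDCO Theorem} gives $\tilde p\prec\tilde q$.

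\emph{Sufficiency.} Here one cannot simply invoke the theorem, since BDCO is a larger class. Instead I will reuse the constructive converse of Theorem \ref{BDCO Theorem} (Appendix \ref{Proof of BDCO Theorem}) and check that the Kraus operators it produces are of SBIO form. From $\tilde p\prec\tilde q$ we get a doubly stochastic $T$ with $\tilde p=T\tilde q$. Birkhoff--von Neumann then gives $T=\sum_k\lambda_k P_{\sigma_k}$ with permutations $\sigma_k$ and weights $\lambda_k\ge 0$. For each $k$ I define
\[
K_k=\sqrt{\lambda_k}\sum_{\alpha\in R}\sqrt{\tfrac{q_{\sigma_k^{-1}(\alpha)}}{p_\alpha}}\;|\hat\phi_{\sigma_k^{-1}(\alpha)}\rangle\langle\hat\psi_\alpha|,
\]
with the convention that terms having $q=0$ are dropped. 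This operator maps block $\alpha$ only into block $\sigma_k^{-1}(\alpha)$, a permutation of labels, so it has the form $\sum_l\Pi_{\pi(l)}C\Pi_l$. Moreover, $K_k\ket{\psi}=\sqrt{\lambda_k}\sum_\beta\sqrt{q_\beta}|\hat\phi_\beta\rangle=\sqrt{\lambda_k}\ket{\phi}$.

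The condition $\sum_k K_k^\dagger K_k\le I$ reduces to the block-diagonal statement $\sum_k\lambda_k q_{\sigma_k^{-1}(\alpha)}/p_\alpha=(T\tilde q)_\alpha/p_\alpha=1$ on the rank-one pieces $|\hat\psi_\alpha\rangle\langle\hat\psi_\alpha|$. So $\sum_k K_k^\dagger K_k$ is a projector $P$ that is block diagonal, and it is bounded by $I$.

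The main obstacle is completing to trace preservation while staying inside SBIO, since the completion operators must themselves have permutation-type block structure. I would first try to complete with operators of the form $\Pi_\alpha(I-P)\Pi_\alpha$ for each $\alpha$. Since $P$ is block diagonal, $I-P=\sum_\alpha\Pi_\alpha(I-P)\Pi_\alpha$, and each such operator is block diagonal, hence of SBIO form with $\pi=\mathrm{id}$. They annihilate $\ket{\psi}$ because $|\hat\psi_\alpha\rangle$ lies in the range of $P$. One must also confirm that $\sum_k K_k^\dagger K_k$ is exactly the orthogonal projector onto $\mathrm{span}\{|\hat\psi_\alpha\rangle\}$, which uses the fact that cross terms between different input blocks vanish because of the orthogonality of the $|\hat\phi_\beta\rangle$ for distinct $\beta$. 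The resulting channel is SBIO and maps $\rho_\psi$ to $\rho_\phi$, which completes the proof.
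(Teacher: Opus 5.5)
Your proposal is correct and follows the paper's route: necessity via the inclusion SBIO $\subseteq$ BDCO together with Theorem \ref{BDCO Theorem}, and sufficiency by checking that the Birkhoff--von Neumann Kraus operators of Appendix \ref{Proof of BDCO Theorem} send each input block into a single output block along a permutation. The differences are minor. You complete to trace preservation with the block-diagonal projectors $\Pi_\alpha(I-P)\Pi_\alpha$ (identity permutation) instead of the paper's rank-one $R_{l,\alpha,r}$. By restricting to $\alpha\in R$ you also handle $p_\alpha=0$ more carefully than the paper. You should, however, state explicitly why $K_k\ket{\psi}=\sqrt{\lambda_k}\ket{\phi}$ still holds after this restriction: $p_\alpha=\sum_k\lambda_k q_{\sigma_k^{-1}(\alpha)}=0$ forces $q_{\sigma_k^{-1}(\alpha)}=0$ whenever $\lambda_k>0$.
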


The detailed verification is given in Appendix \ref{Appendix B}.\\

The above corollary shows that SBIO and BDCO have the same deterministic pure state convertibility power. Although SBIO is more restrictive at the Kraus operator level, the constructive proof of Theorem \ref{BDCO Theorem} shows that every BDCO allowed pure state transformation can also be implemented by an SBIO channel. Thus, for pure states, enlarging the class from SBIO to BDCO does not enlarge the set of reachable states.

\section{Maximally Block Coherent State as a Universal Resource}
\label{sec:Maximally block coherence state}
We now discuss an immediate application of the pure state transformation criterion. We choose a normalized vector 
$$\ket{\eta_\alpha} \in \mathcal{H_\alpha}, \qquad \langle \eta_\alpha | \eta_\alpha \rangle=1$$
from each block. We define the maximally block coherent pure state \cite{brub_second} as
$$\ket{\psi_m}= \frac{1}{\sqrt{m}} \sum_{\alpha=1}^m \ket{\eta_\alpha}.$$
For this state the block probability vector is 
$$\tilde{u}_m=\big( \frac{1}{m}, \frac{1}{m},...,\frac{1}{m}\big).$$

\begin{corollary}
    Let $\ket{\phi} \in \mathcal{H}$ be any pure state with block decomposition
    $$\ket{\phi}=\sum_{\beta =1}^m \ket{\phi_\beta}, \qquad \ket{\phi_\beta}= \Pi_\beta \ket{\phi},$$
    and 
    $$q_\beta=||\ket{\phi_\beta}||^2.$$
    Then the transformation from $\ket{\psi_m}$ to $\ket{\phi}$ is possible under BDCO and SBIO.
\end{corollary}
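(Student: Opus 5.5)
The plan is to reduce the statement to Theorem \ref{BDCO Theorem} and Corollary \ref{SBIO corollary}. Both say that a deterministic pure-state transformation $\ket{\psi}\to\ket{\phi}$ under BDCO, and likewise under SBIO, is possible if and only if $\tilde{p}\prec\tilde{q}$. So it suffices to show that the block probability vector of $\ket{\psi_m}$, namely $\tilde{u}_m=(1/m,\dots,1/m)$, is majorized by the block probability vector $\tilde{q}$ of an arbitrary pure state $\ket{\phi}$. Both vectors live in $\mathbb{R}^m$, with one entry per block, so the majorization relation is well defined.

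To verify $\tilde{u}_m\prec\tilde{q}$, I would use the doubly stochastic characterization of majorization recalled in Section \ref{sec:Preliminary}. Let $J$ be the $m\times m$ matrix with every entry equal to $1/m$. It is doubly stochastic, and since $\sum_\beta q_\beta=\|\ket{\phi}\|^2=1$, we get $J\tilde{q}=(1/m,\dots,1/m)=\tilde{u}_m$. The majorization theorem then gives $\tilde{u}_m\prec\tilde{q}$.

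As a direct cross-check, I would argue from the partial sums. The average of the $r$ largest entries of $\tilde{q}$ is at least the average of all $m$ entries, so $\sum_{k=1}^r q_k^{\downarrow}\ge r/m=\sum_{k=1}^r (u_m)^{\downarrow}_k$ for $r=1,\dots,m-1$. The total sums are both $1$, so equality holds at $r=m$.

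Applying Theorem \ref{BDCO Theorem} with $\ket{\psi}=\ket{\psi_m}$ then gives a BDCO channel with $\mathcal{E}(\ket{\psi_m}\bra{\psi_m})=\ket{\phi}\bra{\phi}$, and Corollary \ref{SBIO corollary} gives the SBIO statement. The constructive proof in Appendix \ref{Proof of BDCO Theorem} can be run with $T=J$. Its Birkhoff decomposition is the uniform mixture of the $m$ cyclic shifts, so each Kraus branch maps every block component $\ket{\eta_\alpha}/\sqrt{m}$ into the appropriate $\ket{\phi_\beta}$ with a common scalar.

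There is no real obstacle here. The only point needing care is that $\ket{\phi}$ may have vanishing blocks ($q_\beta=0$), while $\ket{\psi_m}$ has full support $R=\{1,\dots,m\}$. The theorem is stated for general block vectors, and its converse construction handles zero entries of $\tilde{q}$ through the auxiliary completion Kraus operators, so this case is covered without change. Note also that the choice of unit vectors $\ket{\eta_\alpha}$ is irrelevant, since the criterion depends only on $\tilde{u}_m$.
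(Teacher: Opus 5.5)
Your proposal is correct and is essentially the paper's proof: you reduce to Theorem~\ref{BDCO Theorem} and Corollary~\ref{SBIO corollary}, then check $\tilde{u}_m\prec\tilde{q}$ through the partial sums $\sum_{k=1}^r q_k^{\downarrow}\ge r/m$; your averaging argument justifies this inequality, which the paper only asserts, and your doubly stochastic matrix $J$ gives an equivalent one-line route. One small correction to your last paragraph: entries $q_\beta=0$ are absorbed by the vanishing coefficients $\sqrt{q_{\pi_l(\alpha)}/p_\alpha}$ in the operators $A_l$, with completeness on each $|\hat\psi_\alpha\rangle$ still following from $\sum_l\lambda_l q_{\pi_l(\alpha)}=p_\alpha$, and not by the completion operators $R_{l,\alpha,r}$, which only cover the orthogonal complement of the $|\hat\psi_\alpha\rangle$ inside each block; your conclusion that this case is covered is unaffected.
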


\begin{proof}
    The block probability vector of $\ket{\psi_m}$ is
    $$\tilde{u}_m=\big( \frac{1}{m}, \frac{1}{m},...,\frac{1}{m}\big).$$
    For any block probability vector $\tilde{q}=(q_1,q_2,...q_m)$, the uniform vector is majorized by $\tilde{q}$. Indeed for every $r=1,2,..,m$,
    $$\sum_{k=1}^r q_k^{\downarrow} \geq \frac{r}{m}.$$
    Therefore,
    $$\sum_{k=1}^r \frac{1}{m} \leq \sum_{k=1}^r q_k^{\downarrow},$$
    and hence 
    $$\tilde{u}_m \prec \tilde{q}.$$
    Therefore, the maximally block coherent state can be converted into any pure state under BDCO and SBIO. This completes the proof. 
\end{proof}

The above result ensures that the maximally block coherent state is the most resourceful pure state under the BDCO and SBIO.

\section{Geometry of block level state convertibility}
\label{sec:Numerical Examples}

Throughout this section, we denote the rank one probability vectors by
$$\tilde{p}=(p_0,p_1,p_2,p_3), \qquad \tilde{q}=(q_0,q_1,q_2,q_3).$$
For the block structure $1+2+1$, we define
$$B(\tilde{x})=(x_0, x_1+x_2, x_3).$$
Thus, $B(\tilde{p})$ and $B(\tilde{q})$ denote the corresponding block probability vectors. The symbol $\prec$ always denotes majorization after decreasing rearrangement.

\subsection{Input State Region for a Fixed Target State}

Figure \ref{fig:BDCO and DIO comparision} shows the input state region under DIO and BDCO for the fixed probability vector $q$ associated with a fixed target state, where
$$\tilde{q}=(0.45, 0.25, 0.20, 0.10).$$
Here we consider the block structure as $1+2+1$ and the corresponding projectors are $\Pi_1, \Pi_2$ and $\Pi_3$.\\
Where
$$\Pi_1=\ket{0}\bra{0}, \;\; \Pi_2=\ket{1}\bra{1}+\ket{2}\bra{2},\;\;\Pi_3=\ket{3}\bra{3}.$$
The four-dimensional probability simplex is represented as a tetrahedron, where each point corresponds to an input probability vector 
$$\tilde{p}=(p_0, p_1, p_2, p_3).$$
\begin{figure}[htbp]
\centering
\includegraphics[width=0.45\textwidth]{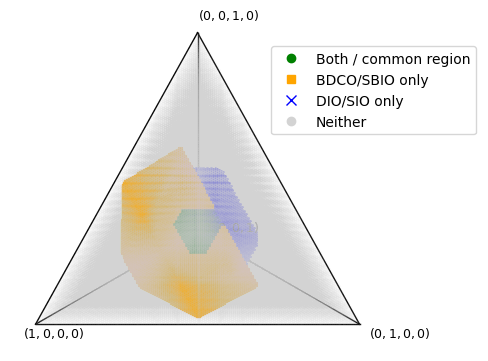}
\caption{Input state region for a fixed target probability vector $\tilde{q}$}
\label{fig:BDCO and DIO comparision}
\end{figure}
For DIO, the conversion condition is $\tilde{p} \prec \tilde{q}$, 
and for BDCO the conversion condition becomes $B(\tilde{p}) \prec B(\tilde{q})$. Here
$$B(\tilde{q})=(0.45, 0.45, 0.10).$$
The green region represents input states satisfying both criteria, the orange region represents states satisfying only the BDCO block majorization criterion, the blue region represents states satisfying only the ordinary DIO criterion, and the gray region represents states satisfying neither criterion.

\subsection{Achievable Target State Region from a Fixed Input State}

Figure \ref{fig:BDCO and DIO comparision 2} shows the achievable target state region from the fixed input probability vector
$$\tilde{p}=(0.50, 0.25, 0.15, 0.10)$$
under DIO and BDCO for the block structure $1+2+1$.
\begin{figure}[ht]
\centering
\includegraphics[width=0.51\textwidth]{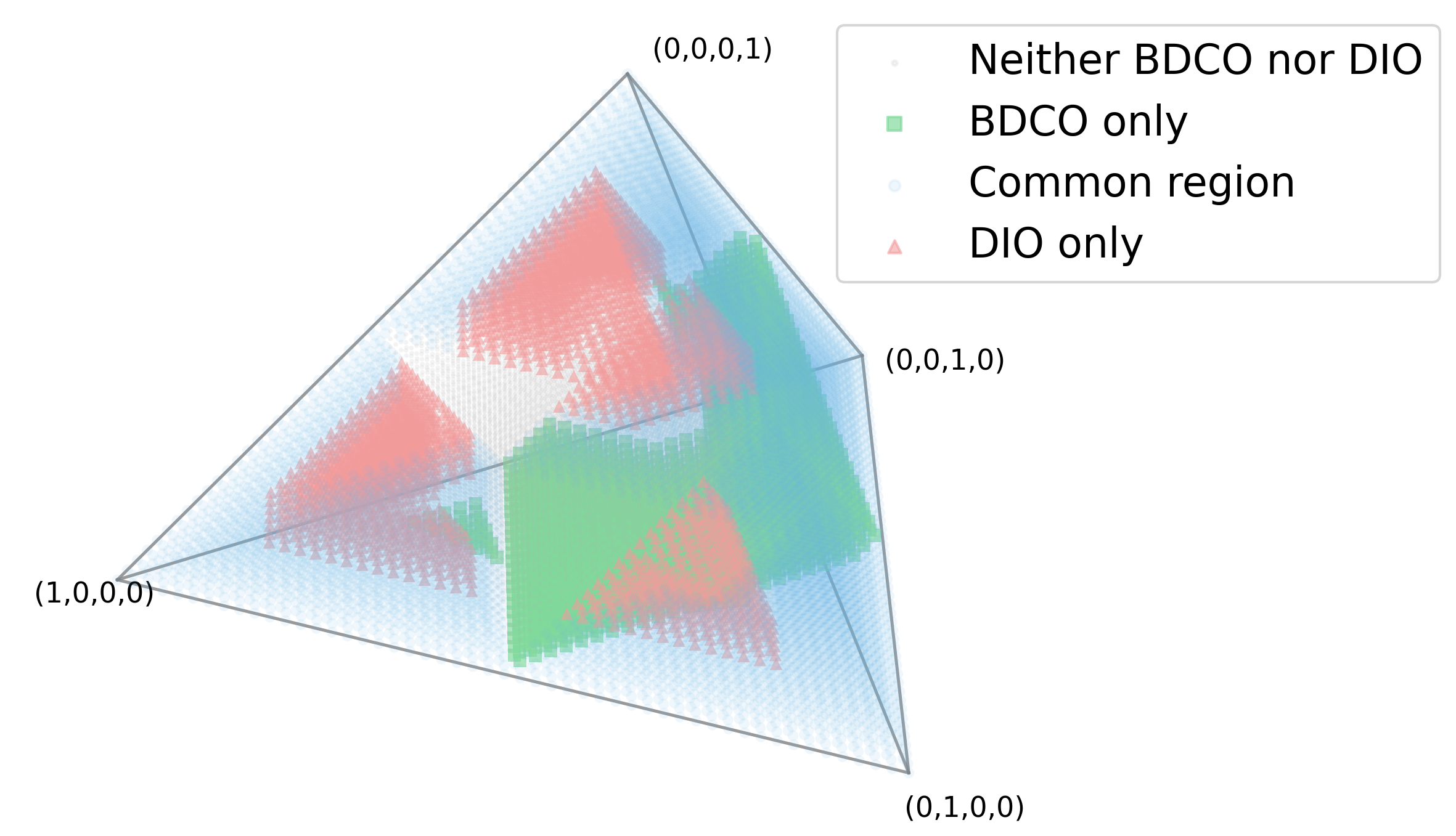}
\caption{Achievable target state region for a fixed input probability vector $\tilde{p}$}
\label{fig:BDCO and DIO comparision 2}
\end{figure}
The four-dimensional probability simplex is represented as a tetrahedron, where each point corresponds to a target probability vector 
$$\tilde{q}=(q_0, q_1, q_2, q_3).$$
For DIO the conversion condition is 
$$\tilde{p} \prec \tilde{q},$$ 
and under BDCO the conversion condition becomes
$$B(\tilde{p}) \prec B(\tilde{q}),$$
for the chosen input,
$$B(\tilde{p})=(0.50, 0.40, 0.10).$$
The blue region represents targets achievable under both criteria, the pink region represents targets achievable only under DIO, the green region represents targets achievable only under BDCO and the gray region represents targets achievable under neither criterion.

\subsection{Comparison of the Reversible region under DIO and BDCO}

\begin{figure}[htbp]
\centering
\includegraphics[width=0.45\textwidth]{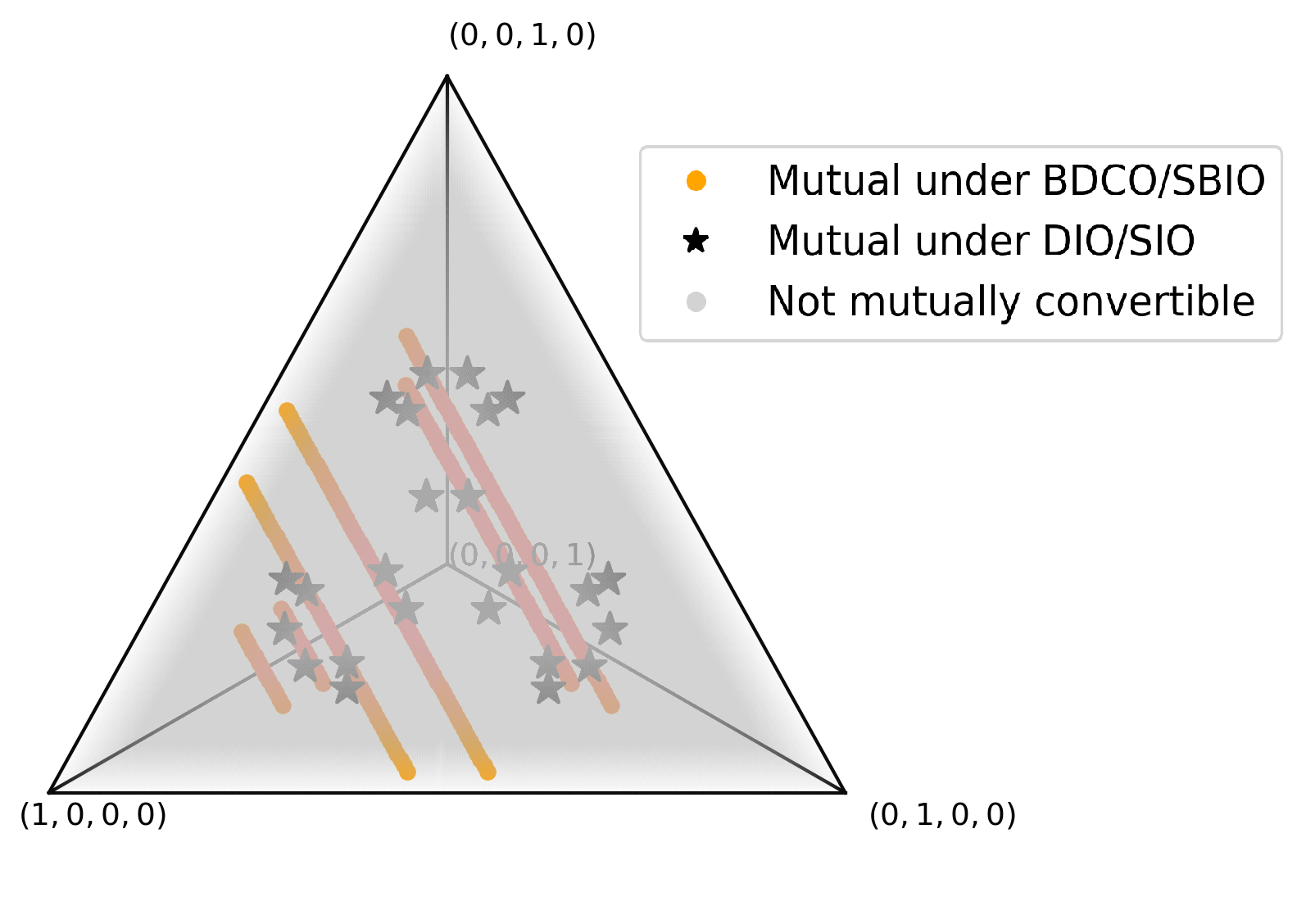}
\caption{Reversible region for a fixed input probability vector $\tilde{p}$}
\label{fig:BDCO and DIO comparision 3}
\end{figure}

Figure \ref{fig:BDCO and DIO comparision 3} shows the reversible region for the fixed probability vector 
$$\tilde{p}=(0.50, 0.25, 0.15, 0.10)$$
under DIO and BDCO for the block structure $1+2+1$.\\
The four-dimensional probability simplex is represented as a tetrahedron, where each point denotes a probability vector $\tilde{q}=(q_0, q_1, q_2, q_3).$
For DIO, mutual convertibility requires
$$\tilde{p} \prec \tilde{q} \;\;\text{and}\;\; \tilde{q} \prec \tilde{p},$$
which is equivalent to 
$$\tilde{p}^\downarrow = \tilde{q}^\downarrow.$$
For BDCO, mutual convertibility is governed by the block probability vectors 
$$B(\tilde{p})=(p_0, p_1+p_2, p_3), \qquad B(\tilde{q})=(q_0, q_1+q_2, q_3),$$
and requires 
$$B(\tilde{p}) \prec B(\tilde{q}) \;\;\text{and}\;\; B(\tilde{q}) \prec B(\tilde{p}),$$
equivalently
$$B(\tilde{p})^\downarrow = B(\tilde{q})^\downarrow.$$
For the chosen $\tilde{p}$, the BDCO reversible set is determined by
$$B(\tilde{q})^\downarrow=(0.50, 0.40, 0.10),$$
whereas the DIO reversible set is determined by
$$\tilde{q}^\downarrow=(0.50, 0.25, 0.15, 0.10).$$
The orange points denote states mutually convertible with $\tilde{p}$ under BDCO, while the black starred points denote states mutually convertible with $\tilde{p}$ under DIO. 

\subsection{Physical Interpretation of Block Level Convertibility Regions}
The three geometrical plots provide a geometric interpretation of the difference between the rank-one coherence theory and the block coherence theory. In the case of DIO, the relevant probability vector is 
$$\tilde{p}=(p_0, p_1, p_2, p_3),$$
and the deterministic pure state transformation is governed by the ordinary majorization condition
$$\tilde{p} \prec \tilde{q}.$$
In contrast, under BDCO with block structure $1+2+1$, the relevant object is the block probability vector
$$B(\tilde{p})=(p_0, p_1+p_2, p_3).$$
Thus, the BDCO transformation condition becomes
$$B(\tilde{p}) \prec B(\tilde{q}).$$
This difference reflects the physical fact that BDCO does not distinguish the individual probabilities $p_1$ and $p_2$ inside the rank two block. The Only operationally relevant component inside the rank two block is $p_1+p_2.$\\
Overall, these three plots show that BDCO and DIO describe state transformations in different ways. DIO looks at the population of every individual basis state. In contrast, BDCO looks only at the total population inside each block. Therefore, two states that look different under DIO may look the same under BDCO if they have the same block populations.\\
This means that BDCO is not just a direct extension of DIO from basis states to blocks. The above figures show that the BDCO and DIO regions may overlap, but it is not the case that one region is simply contained inside the other.\\
The importance of BDCO is that it gives a natural description when the physically relevant coherence is between blocks, not between individual basis states. In such cases, changes inside a block are not treated as resourceful, while coherence between different blocks is the main resource. Thus, BDCO provides a useful and physically meaningful way to study pure transformation at the level of block structure.\\
The mutual convertibility plot is especially important because it shows that BDCO can give a larger and physically meaningful region of mutual transformations. Some transformations that are forbidden under DIO become possible under BDCO.

\section{Conclusion}
\label{sec:Conclusion}
This work provides a complete structural description of deterministic pure state transformations in block coherence theory for PBIO, SBIO, and BDCO. For PBIO, the main result is that, under a physically meaningful nondegeneracy condition, any deterministic pure state conversion is effectively trivial at the resource level. It must be realized by a single active Kraus operator and hence by a block incoherent unitary. When the nondegeneracy assumption is removed, the condition is still highly constrained. It requires each active branch to match the target structure up to a common branch-dependent scalar.

For SBIO and BDCO, the convertibility condition is found to be exactly the condition of majorization of block probability vectors. This establishes a clean ordering of pure states based on their block probability vectors, and the converse part supplies an explicit Kraus realization of the allowed transformation. It shows that these operations can not create additional block coherence. They can only rearrange the distribution of weight among blocks. The coincidence of the SBIO and BDCO criteria is particularly significant, since it shows that enlarging the operation class from strict block dephasing covariance to full block dephasing covariance does not increase the set of reachable pure states.

The state transformation conditions align smoothly with the standard coherence framework when all the projectors are rank one. Thus, the present results genuinely extend known pure state coherence convertibility conditions to a more general setting. At the same time, the block setting reveals a broader resource structure. Under a fixed orthogonal decomposition of the Hilbert space $\mathcal{H}=\bigoplus_{\alpha=1}^{m}\mathcal{H}_\alpha$, the pure state with uniform block probability vector $(\frac{1}{m},\frac{1}{m},...,\frac{1}{m})$ emerges as the maximally block coherence state under SBIO and BDCO. The geometric plots confirm that nontrivial block partitions can enable deterministic transformations forbidden in the standard coherence theory. The mutual convertibility plot is especially useful because it identifies the reversible sector of the transformation. BDCO can group together states that are distinct under DIO but equivalent at the block level.

Overall, our work shows that block coherence supports a physically meaningful convertibility theory, with majorization serving as the central ordering principle for pure states under SBIO and BDCO. A logical next step is to extend these results to mixed states, while also exploring various types of state transformations and examining block coherence quantifiers. 

\section*{ACKNOWLEDGMENT}
Priyabrata Char acknowledges the financial support of the Institute Postdoctoral Fellowship, IIT Guwahati, India. The authors D. Sarkar and I. Chattopadhyay acknowledges DST-FIST India.


\begin{thebibliography}{27}%
\makeatletter
\providecommand \@ifxundefined [1]{%
 \@ifx{#1\undefined}
}%
\providecommand \@ifnum [1]{%
 \ifnum #1\expandafter \@firstoftwo
 \else \expandafter \@secondoftwo
 \fi
}%
\providecommand \@ifx [1]{%
 \ifx #1\expandafter \@firstoftwo
 \else \expandafter \@secondoftwo
 \fi
}%
\providecommand \natexlab [1]{#1}%
\providecommand \enquote  [1]{``#1''}%
\providecommand \bibnamefont  [1]{#1}%
\providecommand \bibfnamefont [1]{#1}%
\providecommand \citenamefont [1]{#1}%
\providecommand \href@noop [0]{\@secondoftwo}%
\providecommand \href [0]{\begingroup \@sanitize@url \@href}%
\providecommand \@href[1]{\@@startlink{#1}\@@href}%
\providecommand \@@href[1]{\endgroup#1\@@endlink}%
\providecommand \@sanitize@url [0]{\catcode `\\12\catcode `\$12\catcode `\&12\catcode `\#12\catcode `\^12\catcode `\_12\catcode `\%12\relax}%
\providecommand \@@startlink[1]{}%
\providecommand \@@endlink[0]{}%
\providecommand \url  [0]{\begingroup\@sanitize@url \@url }%
\providecommand \@url [1]{\endgroup\@href {#1}{\urlprefix }}%
\providecommand \urlprefix  [0]{URL }%
\providecommand \Eprint [0]{\href }%
\providecommand \doibase [0]{http://dx.doi.org/}%
\providecommand \selectlanguage [0]{\@gobble}%
\providecommand \bibinfo  [0]{\@secondoftwo}%
\providecommand \bibfield  [0]{\@secondoftwo}%
\providecommand \translation [1]{[#1]}%
\providecommand \BibitemOpen [0]{}%
\providecommand \bibitemStop [0]{}%
\providecommand \bibitemNoStop [0]{.\EOS\space}%
\providecommand \EOS [0]{\spacefactor3000\relax}%
\providecommand \BibitemShut  [1]{\csname bibitem#1\endcsname}%
\let\auto@bib@innerbib\@empty
\bibitem [{\citenamefont {Baumgratz}\ \emph {et~al.}(2014)\citenamefont {Baumgratz}, \citenamefont {Cramer},\ and\ \citenamefont {Plenio}}]{Baumgratz_2014_Quantifying}%
  \BibitemOpen
  \bibfield  {author} {\bibinfo {author} {\bibfnamefont {T.}~\bibnamefont {Baumgratz}}, \bibinfo {author} {\bibfnamefont {M.}~\bibnamefont {Cramer}}, \ and\ \bibinfo {author} {\bibfnamefont {M.~B.}\ \bibnamefont {Plenio}},\ }\href {\doibase 10.1103/PhysRevLett.113.140401} {\bibfield  {journal} {\bibinfo  {journal} {Physical Review Letters}\ }\textbf {\bibinfo {volume} {113}},\ \bibinfo {pages} {140401} (\bibinfo {year} {2014})}\BibitemShut {NoStop}%
\bibitem [{\citenamefont {Streltsov}\ \emph {et~al.}(2017)\citenamefont {Streltsov}, \citenamefont {Adesso},\ and\ \citenamefont {Plenio}}]{Streltsov_2017_Colloquium}%
  \BibitemOpen
  \bibfield  {author} {\bibinfo {author} {\bibfnamefont {A.}~\bibnamefont {Streltsov}}, \bibinfo {author} {\bibfnamefont {G.}~\bibnamefont {Adesso}}, \ and\ \bibinfo {author} {\bibfnamefont {M.~B.}\ \bibnamefont {Plenio}},\ }\href {\doibase 10.1103/RevModPhys.89.041003} {\bibfield  {journal} {\bibinfo  {journal} {Reviews of Modern Physics}\ }\textbf {\bibinfo {volume} {89}},\ \bibinfo {pages} {041003} (\bibinfo {year} {2017})}\BibitemShut {NoStop}%
\bibitem [{\citenamefont {Giovannetti}\ \emph {et~al.}(2011)\citenamefont {Giovannetti}, \citenamefont {Lloyd},\ and\ \citenamefont {Maccone}}]{Giovannetti_2011_AdvancesMetrology}%
  \BibitemOpen
  \bibfield  {author} {\bibinfo {author} {\bibfnamefont {V.}~\bibnamefont {Giovannetti}}, \bibinfo {author} {\bibfnamefont {S.}~\bibnamefont {Lloyd}}, \ and\ \bibinfo {author} {\bibfnamefont {L.}~\bibnamefont {Maccone}},\ }\href {\doibase 10.1038/nphoton.2011.35} {\bibfield  {journal} {\bibinfo  {journal} {Nature Photonics}\ }\textbf {\bibinfo {volume} {5}},\ \bibinfo {pages} {222} (\bibinfo {year} {2011})}\BibitemShut {NoStop}%
\bibitem [{\citenamefont {Hillery}(2016)}]{Hillery_2016_DeutschJozsa}%
  \BibitemOpen
  \bibfield  {author} {\bibinfo {author} {\bibfnamefont {M.}~\bibnamefont {Hillery}},\ }\href {\doibase 10.1103/PhysRevA.93.012111} {\bibfield  {journal} {\bibinfo  {journal} {Physical Review A}\ }\textbf {\bibinfo {volume} {93}},\ \bibinfo {pages} {012111} (\bibinfo {year} {2016})}\BibitemShut {NoStop}%
\bibitem [{\citenamefont {Lostaglio}\ \emph {et~al.}(2015{\natexlab{a}})\citenamefont {Lostaglio}, \citenamefont {Jennings},\ and\ \citenamefont {Rudolph}}]{Lostaglio_2015_Description}%
  \BibitemOpen
  \bibfield  {author} {\bibinfo {author} {\bibfnamefont {M.}~\bibnamefont {Lostaglio}}, \bibinfo {author} {\bibfnamefont {D.}~\bibnamefont {Jennings}}, \ and\ \bibinfo {author} {\bibfnamefont {T.}~\bibnamefont {Rudolph}},\ }\href {\doibase 10.1038/ncomms7383} {\bibfield  {journal} {\bibinfo  {journal} {Nature Communications}\ }\textbf {\bibinfo {volume} {6}},\ \bibinfo {pages} {6383} (\bibinfo {year} {2015}{\natexlab{a}})}\BibitemShut {NoStop}%
\bibitem [{\citenamefont {Lostaglio}\ \emph {et~al.}(2015{\natexlab{b}})\citenamefont {Lostaglio}, \citenamefont {Korzekwa}, \citenamefont {Jennings},\ and\ \citenamefont {Rudolph}}]{Lostaglio_2015_TimeTranslation}%
  \BibitemOpen
  \bibfield  {author} {\bibinfo {author} {\bibfnamefont {M.}~\bibnamefont {Lostaglio}}, \bibinfo {author} {\bibfnamefont {K.}~\bibnamefont {Korzekwa}}, \bibinfo {author} {\bibfnamefont {D.}~\bibnamefont {Jennings}}, \ and\ \bibinfo {author} {\bibfnamefont {T.}~\bibnamefont {Rudolph}},\ }\href {\doibase 10.1103/PhysRevX.5.021001} {\bibfield  {journal} {\bibinfo  {journal} {Physical Review X}\ }\textbf {\bibinfo {volume} {5}},\ \bibinfo {pages} {021001} (\bibinfo {year} {2015}{\natexlab{b}})}\BibitemShut {NoStop}%
\bibitem [{\citenamefont {Narasimhachar}\ and\ \citenamefont {Gour}(2015)}]{Narasimhachar_2015_LowTemp}%
  \BibitemOpen
  \bibfield  {author} {\bibinfo {author} {\bibfnamefont {V.}~\bibnamefont {Narasimhachar}}\ and\ \bibinfo {author} {\bibfnamefont {G.}~\bibnamefont {Gour}},\ }\href {\doibase 10.1038/ncomms8689} {\bibfield  {journal} {\bibinfo  {journal} {Nature Communications}\ }\textbf {\bibinfo {volume} {6}},\ \bibinfo {pages} {7689} (\bibinfo {year} {2015})}\BibitemShut {NoStop}%
\bibitem [{\citenamefont {Huelga}\ and\ \citenamefont {Plenio}(2013)}]{Huelga_2013_Vibrations}%
  \BibitemOpen
  \bibfield  {author} {\bibinfo {author} {\bibfnamefont {S.~F.}\ \bibnamefont {Huelga}}\ and\ \bibinfo {author} {\bibfnamefont {M.~B.}\ \bibnamefont {Plenio}},\ }\href {\doibase 10.1080/00405000.2013.829687} {\bibfield  {journal} {\bibinfo  {journal} {Contemporary Physics}\ }\textbf {\bibinfo {volume} {54}},\ \bibinfo {pages} {181} (\bibinfo {year} {2013})}\BibitemShut {NoStop}%
\bibitem [{\citenamefont {Chitambar}\ and\ \citenamefont {Gour}(2016)}]{Chitambar_2016_Critical}%
  \BibitemOpen
  \bibfield  {author} {\bibinfo {author} {\bibfnamefont {E.}~\bibnamefont {Chitambar}}\ and\ \bibinfo {author} {\bibfnamefont {G.}~\bibnamefont {Gour}},\ }\href {\doibase 10.1103/PhysRevLett.117.030401} {\bibfield  {journal} {\bibinfo  {journal} {Physical Review Letters}\ }\textbf {\bibinfo {volume} {117}},\ \bibinfo {pages} {030401} (\bibinfo {year} {2016})}\BibitemShut {NoStop}%
\bibitem [{\citenamefont {Yadin}\ \emph {et~al.}(2016)\citenamefont {Yadin}, \citenamefont {Ma}, \citenamefont {Girolami}, \citenamefont {Gu},\ and\ \citenamefont {Vedral}}]{Yadin_2016_Processes}%
  \BibitemOpen
  \bibfield  {author} {\bibinfo {author} {\bibfnamefont {B.}~\bibnamefont {Yadin}}, \bibinfo {author} {\bibfnamefont {J.}~\bibnamefont {Ma}}, \bibinfo {author} {\bibfnamefont {D.}~\bibnamefont {Girolami}}, \bibinfo {author} {\bibfnamefont {M.}~\bibnamefont {Gu}}, \ and\ \bibinfo {author} {\bibfnamefont {V.}~\bibnamefont {Vedral}},\ }\href {\doibase 10.1103/PhysRevX.6.041028} {\bibfield  {journal} {\bibinfo  {journal} {Physical Review X}\ }\textbf {\bibinfo {volume} {6}},\ \bibinfo {pages} {041028} (\bibinfo {year} {2016})}\BibitemShut {NoStop}%
\bibitem [{\citenamefont {Winter}\ and\ \citenamefont {Yang}(2016)}]{Winter_2016_Operational}%
  \BibitemOpen
  \bibfield  {author} {\bibinfo {author} {\bibfnamefont {A.}~\bibnamefont {Winter}}\ and\ \bibinfo {author} {\bibfnamefont {D.}~\bibnamefont {Yang}},\ }\href {\doibase 10.1103/PhysRevLett.116.120404} {\bibfield  {journal} {\bibinfo  {journal} {Physical Review Letters}\ }\textbf {\bibinfo {volume} {116}},\ \bibinfo {pages} {120404} (\bibinfo {year} {2016})}\BibitemShut {NoStop}%
\bibitem [{\citenamefont {Nielsen}(1999)}]{Nielsen_1999_Conditions}%
  \BibitemOpen
  \bibfield  {author} {\bibinfo {author} {\bibfnamefont {M.~A.}\ \bibnamefont {Nielsen}},\ }\href {\doibase 10.1103/PhysRevLett.83.436} {\bibfield  {journal} {\bibinfo  {journal} {Physical Review Letters}\ }\textbf {\bibinfo {volume} {83}},\ \bibinfo {pages} {436} (\bibinfo {year} {1999})}\BibitemShut {NoStop}%
\bibitem [{\citenamefont {Du}\ \emph {et~al.}(2015)\citenamefont {Du}, \citenamefont {Bai},\ and\ \citenamefont {Guo}}]{Du_2015_Conditions}%
  \BibitemOpen
  \bibfield  {author} {\bibinfo {author} {\bibfnamefont {S.}~\bibnamefont {Du}}, \bibinfo {author} {\bibfnamefont {Z.}~\bibnamefont {Bai}}, \ and\ \bibinfo {author} {\bibfnamefont {Y.}~\bibnamefont {Guo}},\ }\href {\doibase 10.1103/PhysRevA.91.052120} {\bibfield  {journal} {\bibinfo  {journal} {Physical Review A}\ }\textbf {\bibinfo {volume} {91}},\ \bibinfo {pages} {052120} (\bibinfo {year} {2015})}\BibitemShut {NoStop}%
\bibitem [{\citenamefont {Marvian}\ and\ \citenamefont {Spekkens}(2016)}]{Marvian_2016_Speakable}%
  \BibitemOpen
  \bibfield  {author} {\bibinfo {author} {\bibfnamefont {I.}~\bibnamefont {Marvian}}\ and\ \bibinfo {author} {\bibfnamefont {R.~W.}\ \bibnamefont {Spekkens}},\ }\href {\doibase 10.1103/PhysRevA.94.052324} {\bibfield  {journal} {\bibinfo  {journal} {Physical Review A}\ }\textbf {\bibinfo {volume} {94}},\ \bibinfo {pages} {052324} (\bibinfo {year} {2016})}\BibitemShut {NoStop}%
\bibitem [{\citenamefont {Theurer}\ \emph {et~al.}(2017)\citenamefont {Theurer}, \citenamefont {Killoran}, \citenamefont {Egloff},\ and\ \citenamefont {Plenio}}]{Theurer_2017_Superposition}%
  \BibitemOpen
  \bibfield  {author} {\bibinfo {author} {\bibfnamefont {T.}~\bibnamefont {Theurer}}, \bibinfo {author} {\bibfnamefont {N.}~\bibnamefont {Killoran}}, \bibinfo {author} {\bibfnamefont {D.}~\bibnamefont {Egloff}}, \ and\ \bibinfo {author} {\bibfnamefont {M.~B.}\ \bibnamefont {Plenio}},\ }\href {\doibase 10.1103/PhysRevLett.119.230401} {\bibfield  {journal} {\bibinfo  {journal} {Physical Review Letters}\ }\textbf {\bibinfo {volume} {119}},\ \bibinfo {pages} {230401} (\bibinfo {year} {2017})}\BibitemShut {NoStop}%
\bibitem [{\citenamefont {Marvian}\ and\ \citenamefont {Spekkens}(2014)}]{Marvian_2014_Noether}%
  \BibitemOpen
  \bibfield  {author} {\bibinfo {author} {\bibfnamefont {I.}~\bibnamefont {Marvian}}\ and\ \bibinfo {author} {\bibfnamefont {R.~W.}\ \bibnamefont {Spekkens}},\ }\href {\doibase 10.1038/ncomms4821} {\bibfield  {journal} {\bibinfo  {journal} {Nature Communications}\ }\textbf {\bibinfo {volume} {5}},\ \bibinfo {pages} {3821} (\bibinfo {year} {2014})}\BibitemShut {NoStop}%
\bibitem [{\citenamefont {Mani}\ \emph {et~al.}(2024)\citenamefont {Mani}, \citenamefont {Rezazadeh},\ and\ \citenamefont {Karimipour}}]{Coherence_subspaces}%
  \BibitemOpen
  \bibfield  {author} {\bibinfo {author} {\bibfnamefont {A.}~\bibnamefont {Mani}}, \bibinfo {author} {\bibfnamefont {F.}~\bibnamefont {Rezazadeh}}, \ and\ \bibinfo {author} {\bibfnamefont {V.}~\bibnamefont {Karimipour}},\ }\href {\doibase 10.1103/PhysRevA.109.012435} {\bibfield  {journal} {\bibinfo  {journal} {Phys. Rev. A}\ }\textbf {\bibinfo {volume} {109}},\ \bibinfo {pages} {012435} (\bibinfo {year} {2024})}\BibitemShut {NoStop}%
\bibitem [{\citenamefont {Bischof}\ \emph {et~al.}(2019{\natexlab{a}})\citenamefont {Bischof}, \citenamefont {Kampermann},\ and\ \citenamefont {Bru\ss{}}}]{block_coherence}%
  \BibitemOpen
  \bibfield  {author} {\bibinfo {author} {\bibfnamefont {F.}~\bibnamefont {Bischof}}, \bibinfo {author} {\bibfnamefont {H.}~\bibnamefont {Kampermann}}, \ and\ \bibinfo {author} {\bibfnamefont {D.}~\bibnamefont {Bru\ss{}}},\ }\href {\doibase 10.1103/PhysRevLett.123.110402} {\bibfield  {journal} {\bibinfo  {journal} {Phys. Rev. Lett.}\ }\textbf {\bibinfo {volume} {123}},\ \bibinfo {pages} {110402} (\bibinfo {year} {2019}{\natexlab{a}})}\BibitemShut {NoStop}%
\bibitem [{\citenamefont {Dey}\ \emph {et~al.}(2024)\citenamefont {Dey}, \citenamefont {Chakraborty}, \citenamefont {Char}, \citenamefont {Chattopadhyay}, \citenamefont {Bhar},\ and\ \citenamefont {Sarkar}}]{Dey:2019rje}%
  \BibitemOpen
  \bibfield  {author} {\bibinfo {author} {\bibfnamefont {P.~K.}\ \bibnamefont {Dey}}, \bibinfo {author} {\bibfnamefont {D.}~\bibnamefont {Chakraborty}}, \bibinfo {author} {\bibfnamefont {P.}~\bibnamefont {Char}}, \bibinfo {author} {\bibfnamefont {I.}~\bibnamefont {Chattopadhyay}}, \bibinfo {author} {\bibfnamefont {A.}~\bibnamefont {Bhar}}, \ and\ \bibinfo {author} {\bibfnamefont {D.}~\bibnamefont {Sarkar}},\ }\href {\doibase 10.26421/qic24.9-10-2} {\bibfield  {journal} {\bibinfo  {journal} {Quant. Inf. Comput.}\ }\textbf {\bibinfo {volume} {24}},\ \bibinfo {pages} {734} (\bibinfo {year} {2024})}\BibitemShut {NoStop}%
\bibitem [{\citenamefont {Xu}\ \emph {et~al.}(2020)\citenamefont {Xu}, \citenamefont {Shao},\ and\ \citenamefont {Fei}}]{brub_second}%
  \BibitemOpen
  \bibfield  {author} {\bibinfo {author} {\bibfnamefont {J.}~\bibnamefont {Xu}}, \bibinfo {author} {\bibfnamefont {L.-H.}\ \bibnamefont {Shao}}, \ and\ \bibinfo {author} {\bibfnamefont {S.-M.}\ \bibnamefont {Fei}},\ }\href {\doibase 10.1103/PhysRevA.102.012411} {\bibfield  {journal} {\bibinfo  {journal} {Phys. Rev. A}\ }\textbf {\bibinfo {volume} {102}},\ \bibinfo {pages} {012411} (\bibinfo {year} {2020})}\BibitemShut {NoStop}%
\bibitem [{\citenamefont {Ren}\ \emph {et~al.}(2021)\citenamefont {Ren}, \citenamefont {Gao}, \citenamefont {Ren}, \citenamefont {Wang},\ and\ \citenamefont {Bai}}]{Ren:2021deq}%
  \BibitemOpen
  \bibfield  {author} {\bibinfo {author} {\bibfnamefont {L.-H.}\ \bibnamefont {Ren}}, \bibinfo {author} {\bibfnamefont {M.}~\bibnamefont {Gao}}, \bibinfo {author} {\bibfnamefont {J.}~\bibnamefont {Ren}}, \bibinfo {author} {\bibfnamefont {Z.~D.}\ \bibnamefont {Wang}}, \ and\ \bibinfo {author} {\bibfnamefont {Y.-K.}\ \bibnamefont {Bai}},\ }\href {\doibase 10.1088/1367-2630/abd9e6} {\bibfield  {journal} {\bibinfo  {journal} {New J. Phys.}\ }\textbf {\bibinfo {volume} {23}},\ \bibinfo {pages} {043053} (\bibinfo {year} {2021})}\BibitemShut {NoStop}%
\bibitem [{\citenamefont {Kim}\ \emph {et~al.}(2021)\citenamefont {Kim}, \citenamefont {Xiong}, \citenamefont {Kumar},\ and\ \citenamefont {Wu}}]{kim2021converting}%
  \BibitemOpen
  \bibfield  {author} {\bibinfo {author} {\bibfnamefont {S.}~\bibnamefont {Kim}}, \bibinfo {author} {\bibfnamefont {C.}~\bibnamefont {Xiong}}, \bibinfo {author} {\bibfnamefont {A.}~\bibnamefont {Kumar}}, \ and\ \bibinfo {author} {\bibfnamefont {J.}~\bibnamefont {Wu}},\ }\href@noop {} {\bibfield  {journal} {\bibinfo  {journal} {Physical Review A}\ }\textbf {\bibinfo {volume} {103}},\ \bibinfo {pages} {052418} (\bibinfo {year} {2021})}\BibitemShut {NoStop}%
\bibitem [{\citenamefont {Bischof}\ \emph {et~al.}(2019{\natexlab{b}})\citenamefont {Bischof}, \citenamefont {Kampermann},\ and\ \citenamefont {Bru{\ss}}}]{Bischof_2019_POVMCoherence}%
  \BibitemOpen
  \bibfield  {author} {\bibinfo {author} {\bibfnamefont {F.}~\bibnamefont {Bischof}}, \bibinfo {author} {\bibfnamefont {H.}~\bibnamefont {Kampermann}}, \ and\ \bibinfo {author} {\bibfnamefont {D.}~\bibnamefont {Bru{\ss}}},\ }\href {\doibase 10.1103/PhysRevLett.123.110402} {\bibfield  {journal} {\bibinfo  {journal} {Physical Review Letters}\ }\textbf {\bibinfo {volume} {123}},\ \bibinfo {pages} {110402} (\bibinfo {year} {2019}{\natexlab{b}})}\BibitemShut {NoStop}%
\bibitem [{\citenamefont {Bischof}\ \emph {et~al.}(2021)\citenamefont {Bischof}, \citenamefont {Kampermann},\ and\ \citenamefont {Bru{\ss}}}]{Bischof_2021_Quantifying}%
  \BibitemOpen
  \bibfield  {author} {\bibinfo {author} {\bibfnamefont {F.}~\bibnamefont {Bischof}}, \bibinfo {author} {\bibfnamefont {H.}~\bibnamefont {Kampermann}}, \ and\ \bibinfo {author} {\bibfnamefont {D.}~\bibnamefont {Bru{\ss}}},\ }\href {\doibase 10.1103/PhysRevA.103.032429} {\bibfield  {journal} {\bibinfo  {journal} {Physical Review A}\ }\textbf {\bibinfo {volume} {103}},\ \bibinfo {pages} {032429} (\bibinfo {year} {2021})}\BibitemShut {NoStop}%
\bibitem [{\citenamefont {Marshall}\ \emph {et~al.}(2011)\citenamefont {Marshall}, \citenamefont {Olkin},\ and\ \citenamefont {Arnold}}]{Marshall_2011_Majorization}%
  \BibitemOpen
  \bibfield  {author} {\bibinfo {author} {\bibfnamefont {A.~W.}\ \bibnamefont {Marshall}}, \bibinfo {author} {\bibfnamefont {I.}~\bibnamefont {Olkin}}, \ and\ \bibinfo {author} {\bibfnamefont {B.~C.}\ \bibnamefont {Arnold}},\ }\href {\doibase 10.1007/978-0-387-68276-1} {\emph {\bibinfo {title} {{Inequalities: Theory of Majorization and Its Applications}}}},\ \bibinfo {edition} {2nd}\ ed.,\ Springer Series in Statistics\ (\bibinfo  {publisher} {Springer},\ \bibinfo {address} {New York},\ \bibinfo {year} {2011})\BibitemShut {NoStop}%
\bibitem [{\citenamefont {Bhatia}(1997)}]{Bhatia_1997_MatrixAnalysis}%
  \BibitemOpen
  \bibfield  {author} {\bibinfo {author} {\bibfnamefont {R.}~\bibnamefont {Bhatia}},\ }\href {\doibase 10.1007/978-1-4612-0653-8} {\emph {\bibinfo {title} {{Matrix Analysis}}}},\ \bibinfo {series} {Graduate Texts in Mathematics}, Vol.\ \bibinfo {volume} {169}\ (\bibinfo  {publisher} {Springer},\ \bibinfo {address} {New York},\ \bibinfo {year} {1997})\BibitemShut {NoStop}%
\bibitem [{\citenamefont {Uhlmann}(1970)}]{Uhlmann_1970_Shannon}%
  \BibitemOpen
  \bibfield  {author} {\bibinfo {author} {\bibfnamefont {A.}~\bibnamefont {Uhlmann}},\ }\href {\doibase 10.1016/0034-4877(70)90009-1} {\bibfield  {journal} {\bibinfo  {journal} {Reports on Mathematical Physics}\ }\textbf {\bibinfo {volume} {1}},\ \bibinfo {pages} {147} (\bibinfo {year} {1970})}\BibitemShut {NoStop}%
\end{thebibliography}

%

\appendix
\section{Proof of Theorem \ref{BDCO Theorem}}
\label{Proof of BDCO Theorem}

 Let 
    $$\mathcal{E}(X)=\sum_l K_l X K_l^\dagger$$
    be a Kraus representation of $\mathcal{E}.$ Since
    $$\mathcal{E}(\rho_\psi)=\rho_\phi.$$
    Thus each nonzero post measurement vector must be parallel to $\ket{\phi}$. Hence there exists complex numbers $c_l$ such that
    $$K_l \ket{\psi}=c_l \ket{\phi}\;\;\;\text{for all $l$}$$
    Since $\mathcal{E}$ is trace preserving,
    $$\sum_l K_l^{\dagger} K_l=I$$
    Therefore
\begin{equation*}
\begin{aligned}
    1=\bra{\psi}I\ket{\psi}&=\sum_l \bra{\psi} K_l^{\dagger} K_l \ket{\psi}\\
    &=\sum_l ||K_l \ket{\psi}||^2\\
    &=\sum_l |c_l|^2
\end{aligned}
\end{equation*}
In particular, the vector $c=(c_l)_l$ has unit norm, because
$$||c||^2=\sum_l |c_l|^2=1$$
Projecting $K_l\ket{\psi}=c_l \ket{\phi}$ onto the $\beta$th output block gives
$$\Pi_\beta K_l \ket{\psi}=c_l \Pi_\beta \ket{\phi}=c_l \ket{\phi_\beta}.$$
Since $$\ket{\psi}=\sum_\alpha \ket{\psi_\alpha},$$
Using \eqref{eq 9} We obtain for each $l$
\begin{equation}
    \sum_\alpha \Pi_\beta K_l \ket{\psi_\alpha}=c_l \sqrt{q_\beta} |\hat\phi_\beta\rangle.
    \label{eq 10}
\end{equation}
Now we use the BDCO condition. For $\alpha \neq \gamma$,
$$\Delta(\ket{\psi_\alpha}\bra{\psi_\gamma})=0.$$
Therefore
$$\Delta\mathcal{E}(\ket{\psi_\alpha}\bra{\psi_\gamma})=\mathcal{E}\Delta(\ket{\psi_\alpha}\bra{\psi_\gamma})=0$$
Thus, for every $\beta$ and $\alpha \neq \gamma$,
\begin{equation}
    \Pi_\beta \mathcal{E} (\ket{\psi_\alpha} \bra{\psi_\gamma}) \Pi_\beta=0,
\label{eq 11}
\end{equation}
Equivalently,
\begin{equation}
   \sum_l \Pi_\beta K_l \ket{\psi_\alpha} \bra{\psi_\gamma} K_l^\dagger \Pi_\beta=0
\label{eq 12}
\end{equation}
Similarly, for fixed $\alpha$,
$$\Delta(\ket{\psi_\alpha}\bra{\psi_\alpha})=\ket{\psi_\alpha}\bra{\psi_\alpha}$$
Hence
$$\Delta\mathcal{E}(\ket{\psi_\alpha}\bra{\psi_\alpha})=\mathcal{E}(\ket{\psi_\alpha} \bra{\psi_\alpha})$$
Therefore, for $\beta \neq \delta,$
\begin{equation}
    \Pi_\beta \mathcal{E} (\ket{\psi_\alpha} \bra{\psi_\alpha}) \Pi_\delta=0
\label{eq 13}
\end{equation}
Equivalently,
\begin{equation}
    \sum_l \Pi_\beta K_l \ket{\psi_\alpha} \bra{\psi_\alpha} K_l^\dagger \Pi_\delta=0
\label{eq 14}
\end{equation}
Next we define 
$$A_{\beta \alpha}=\Pi_\beta \mathcal{E} (\ket{\psi_\alpha} \bra{\psi_\alpha}) \Pi_\beta$$
Then $$A_{\beta \alpha} \geq 0$$
Also,
$$\ket{\psi}\bra{\psi}=\sum_{\alpha,\gamma} \ket{\psi_\alpha} \bra{\psi_\gamma}.$$
Using \eqref{eq 11} we can write
$$\sum_\alpha A_{\beta \alpha}=\sum_\alpha \Pi_\beta \mathcal{E} (\ket{\psi_\alpha} \bra{\psi_\alpha}) \Pi_\beta=\ket{\phi_\beta} \bra{\phi_\beta}=q_\beta |\hat\phi_\beta\rangle  \langle\hat\phi_\beta|.$$
Thus
\begin{equation}
    \sum_\alpha A_{\beta \alpha}=q_\beta |\hat\phi_\beta\rangle  \langle\hat\phi_\beta|.
\label{eq 15}
\end{equation}
Since each $A_{\beta \alpha} \geq 0$ and their sum lies in the subspace $\text{span}\{|\hat\phi_\beta\rangle\}$, each vector
$$\Pi_\beta K_l \ket{\psi_\alpha}$$
must also lie in this span. Therefore, for every $l, \beta, \alpha$ there exists a scalar $a_l(\beta, \alpha)$ such that
\begin{equation}
    \Pi_\beta K_l \ket{\psi_\alpha}=a_l(\beta ,\alpha) |\hat \phi_\beta \rangle.
\label{eq 16}
\end{equation}
We define $a_{\beta \alpha}=(a_l(\beta,\alpha))_l.$ 
Using \eqref{eq 11} and \eqref{eq 16}, we have for $\alpha \neq \gamma$
$$\sum_l \overline{a_l(\beta,\gamma)} a_l(\beta,\alpha)  |\hat\phi_\beta\rangle  \langle\hat\phi_\beta|=0.$$
Equivalently for $\alpha \neq \gamma$
\begin{equation}
    \langle a_{\beta \gamma},a_{\beta \alpha} \rangle=\sum_l \overline{a_l(\beta,\gamma)} a_l(\beta,\alpha)=0
\label{eq 17}
\end{equation}
Similarly using \eqref{eq 14} and \eqref{eq 16} we have for $\beta \neq \delta$
$$\sum_l \overline{a_l(\delta,\alpha)} a_l(\beta,\alpha) |\hat\phi_\beta\rangle  \langle\hat\phi_\delta|=0.$$
Equivalently for $\beta \neq \delta$
\begin{equation}
    \langle a_{\delta \alpha},a_{\beta \alpha} \rangle=\sum_l \overline{a_l(\delta,\alpha)} a_l(\beta,\alpha)=0
\label{eq 18}
\end{equation}
Now from \eqref{eq 10} and \eqref{eq 16},
\begin{equation}
   \sum_\alpha \Pi_\beta K_l \ket{\psi_\alpha}=\sum_\alpha a_l(\beta ,\alpha) |\hat\phi_\beta\rangle=c_l \sqrt{q_\beta} |\hat\phi_\beta\rangle
\label{eq 19}
\end{equation}
Since $|\hat\phi_\beta\rangle \neq 0,$
$$\sum_\alpha a_l(\beta , \alpha)=c_l \sqrt{q_\beta}$$
Since the above equation holds for every $l$, we have
\begin{equation}
    \sum_\alpha a_{\beta \alpha}=c \sqrt{q_\beta} .
\label{eq 20}
\end{equation}
Taking the inner product of \eqref{eq 20} with $a_{\beta \alpha}$,
\begin{equation}
\langle a_{\beta \alpha},\sum_\gamma a_{\beta \gamma} \rangle= \sqrt{q_\beta} \langle a_{\beta \alpha}, c \rangle.
\label{eq 21}
\end{equation}
For $a_{\beta \alpha} \neq 0$, we define
\begin{equation}
    \hat a_{\beta \alpha}=\frac{a_{\beta \alpha}}{||a_{\beta \alpha}||}
\label{eq 22}
\end{equation}
Using \eqref{eq 21}, \eqref{eq 22} we can write
\begin{equation*}
    \langle a_{\beta \alpha},\sum_\gamma a_{\beta \gamma} \rangle= \sqrt{q_\beta} ||a_{\beta \alpha}|| \langle \hat a_{\beta \alpha}, c \rangle.
\end{equation*}
Now we use \eqref{eq 17}. Since $||a_{\beta \alpha}|| \geq 0, \sqrt{q_\beta} > 0$, taking modulus on both sides of above equation we get
\begin{equation}
    ||a_{\beta \alpha}||^2= q_\beta | \langle \hat a_{\beta \alpha}, c \rangle |^2.
\label{eq 23}
\end{equation}
Next we define 
$$T_{\alpha \beta}=| \langle \hat a_{\beta \alpha}, c \rangle |^2.$$
Therefore,
\begin{equation}
    ||a_{\beta \alpha}||^2= T_{\alpha \beta} q_\beta.
\label{eq 24}
\end{equation}
Now by trace preserving condition,
$$p_\alpha=||\ket{\psi_\alpha}||^2=\sum_l ||K_l \ket{\psi_\alpha} ||^2.$$
Using
$$I=\sum_\beta \Pi_\beta,$$
we get
$$p_\alpha=\sum_{l,\beta} ||\Pi_\beta K_l \ket{\psi_\alpha} ||^2.$$
Using \eqref{eq 16},
$$||\Pi_\beta K_l \ket{\psi_\alpha} ||^2=|a_l (\beta, \alpha) |^2.$$
Thus
$$p_\alpha = \sum_\beta ||a_{\beta \alpha}||^2.$$
Using \eqref{eq 24},
\begin{equation}
    p_\alpha=\sum_\beta T_{\alpha \beta}q_\beta.
\label{eq 25}
\end{equation}
Hence
\begin{equation}
    p=Tq.
\label{eq 26}
\end{equation}
It remains to show that $T$ is a doubly sub-stochastic matrix. We first fix $\beta$. By \eqref{eq 17}, the normalized vectors $$\{\hat a_{\beta \alpha}\}_\alpha $$ are orthonormal. Therefore, by Bessel's inequality,
\begin{equation}
    \sum_\alpha T_{\alpha \beta}=\sum_\alpha | \langle \hat a_{\beta \alpha}, c \rangle |^2 \leq ||c||^2=1
\label{eq 27}
\end{equation}
Similarly, fixing $\alpha$, by \eqref{eq 18}, the normalized vectors $$\{\hat a_{\beta \alpha}\}_\beta $$
are orthonormal. Again by Bessel's inequality,
\begin{equation}
    \sum_\beta T_{\alpha \beta}=\sum_\beta | \langle \hat a_{\beta \alpha}, c \rangle |^2 \leq ||c||^2=1.
\label{eq 28}
\end{equation}
Since also
$$T_{\alpha \beta} \geq 0,$$
using \eqref{eq 27} and \eqref{eq 28} we can say that the matrix $T$ is doubly sub-stochastic matrix.\\
Since $T$ is doubly sub-stochastic and $\tilde{p}=T \tilde{q}$ holds, Theorem $2$ gives $\tilde{p} \prec_w \tilde{q}.$ Since
$$\sum_\alpha p_\alpha =\sum_\beta q_\beta=1,$$
Weak majorization becomes strong majorization.
hence 
\begin{equation}
    \tilde{p} \prec \tilde{q}.
\label{eq 29}
\end{equation}
This completes the necessary part.

Conversely, let us consider $\tilde{p} \prec \tilde{q},$ then there exists a doubly stochastic matrix $T$ such that 
$$\tilde{p}=T \tilde{q}.$$
By the Birkhoff-Von Neumann theorem, we can write
$$T=\sum_l \lambda_l S_l,$$
where $\lambda_l \geq 0,$ $\sum_l \lambda_l=1,$ and each $S_l$ is a permutation matrix. Let $\pi_l$ denote the permutation associated with $S_l,$ so that 
$$(S_l \tilde{q})_\alpha=q_{\pi_l (\alpha)}.$$
Equivalently for each $\alpha$
\begin{equation}
p_\alpha = \sum_l \lambda_l q_{\pi_l(\alpha)}
\label{eq 30}
\end{equation}
For each $l$, we define
\begin{equation}
    A_l=\sqrt{\lambda_l} \sum_\alpha \sqrt{\frac{q_{\pi_l(\alpha)}}{p_\alpha}} |\hat\phi_{\pi_l(\alpha)} \rangle \langle \hat \psi_\alpha |
\label{eq 31}
\end{equation}
From \eqref{eq 31} we can write
\begin{equation*}
    \begin{aligned}
        A_l \ket{\psi} &= \left( \sqrt{\lambda_l} \sum_\alpha \sqrt{\frac{q_{\pi_l(\alpha)}}{p_\alpha}} |\hat\phi_{\pi_l(\alpha)} \rangle \langle \hat \psi_\alpha | \right) \left(\sum_\gamma \sqrt{p_\gamma} | \hat\psi_\gamma \rangle \right)\\
        &=\sqrt{\lambda_l} \sum_\alpha \langle \hat\psi_\alpha | \sum_\gamma \sqrt{p_\gamma} |\hat\psi_\gamma \rangle \sqrt{\frac{q_{\pi_l(\alpha)}}{p_\alpha}} |\hat\phi_{\pi_l(\alpha)} \rangle \\
    \end{aligned}
\end{equation*}

Using the relation $\langle \hat\psi_\alpha | \hat\psi_\gamma \rangle=\delta_{\alpha \gamma}$ we can write the above equation as

\begin{equation}
    A_l \ket{\psi}= \sqrt{\lambda_l} \sum_\alpha \sqrt{q_{\pi_l(\alpha)}} |\hat\phi_{\pi_l(\alpha)} \rangle 
\label{eq 32}
\end{equation}
Here $\alpha$ runs over all input block levels $\alpha=1,2,...,m$ and since $\pi_l$ is permutation, the values $\pi_l(\alpha)$ also runs over block levels $1,2,..m$ exactly once. so we can write $\beta=\pi_l(\alpha)$. so we can define
$$\beta=\pi_l(\alpha)$$
From \eqref{eq 32}
\begin{equation}
    \begin{aligned}
        A_l \ket{\psi}&=\sqrt{\lambda_l} \sum_\alpha \sqrt{q_{\pi_l(\alpha)}} |\hat\phi_{\pi_l(\alpha)} \rangle \\
        &= \sqrt{\lambda_l}  \sum_\beta \sqrt{q_\beta} | \hat\phi_\beta \rangle\\
        &= \sqrt{\lambda_l} \ket{\phi}
    \end{aligned}
\label{old kraus}
\end{equation}
Consequently
$$\mathcal{E} (\ket{\psi} \bra{\psi})=\ket{\phi} \bra{\phi}$$ holds.\\
But in general,
\begin{equation}
\begin{aligned}
    \sum_l A_l^\dagger A_l &= \sum_\alpha | \hat \psi_\alpha \rangle \langle \hat \psi_\alpha |\\
    & \neq I.
\end{aligned}
\end{equation}

So completeness condition is not satisfied for this set of Kraus operators.
To make the map trace preserving on the whole Hilbert space, we first chose an orthonormal basis for each block $\mathcal{H_\alpha}$ 
$$\{ |\hat\psi_\alpha \rangle, | e_1^{(\alpha)} \rangle, |e_2^{(\alpha)} \rangle,....,| e_{d_\alpha-1}^{(\alpha)} \rangle \},$$
Where $dim \mathcal{H_\alpha}=d_\alpha.$


For every output block $\mathcal{H}_{\pi_k(\alpha)}$, we choose an normalized vector $|f_{\pi_k(\alpha)} \rangle \in \mathcal{H}_{\pi_k(\alpha)}.$

Now we consider full Kraus family $K_l$ as  $\{A_l\}_l$ $\cup$ $\{R_{l, \alpha, r}\}_{l, \alpha,r},$
Where \\
$$A_l= \sqrt{\lambda_l} \sum_\alpha \sqrt{\frac{q_{\pi_l(\alpha)}}{p_\alpha}} |\hat\phi_{\pi_l(\alpha)} \rangle \langle \hat \psi_\alpha |,$$
and 
$$R_{l, \alpha, r}= \sqrt{\lambda_l} |f_{\pi_l(\alpha)} \rangle \langle e^{(\alpha)}_r|.$$
Hence action of $\mathcal{E}$ on some state $\rho$ will be
\begin{equation}
    \mathcal{E} (\rho)= \sum_l A_l \rho A_l^\dagger + \sum_{l, \alpha, r} R_{l, \alpha, r} \rho R_{l, \alpha, r}^\dagger
\label{eq 42}
\end{equation}
Now 
\begin{equation}
    \begin{aligned}
        \sum_l R_{l , \alpha, r}^\dagger R_{l, \alpha,r} &= \sum_l \lambda_l | e^{(\alpha)}_r \rangle \langle e^{(\alpha)}_r|\\
        &= | e^{(\alpha)}_r \rangle \langle e^{(\alpha)}_r|.
    \end{aligned}
\end{equation}

Summing over all $\alpha, r$ we have
$$\sum_{l, \alpha, r} R_{l, \alpha, r}^\dagger R_{l,\alpha,r}=\sum_\alpha \sum_r | e^{(\alpha)}_r \rangle \langle e^{(\alpha)}_r|=I-\sum_\alpha | \hat \psi_\alpha \rangle \langle \hat \psi_\alpha | $$

Using the above equation
$$\sum_l A_l^\dagger A_l+\sum_{l, \alpha, r} R_{l, \alpha, r}^\dagger R_{l, \alpha, r}=I$$
So, the completeness condition is satisfied
Now 

Now by the construction of orthonormal basis we have
$$ \langle e_r^{(\alpha)} | \psi \rangle = \langle e_r^{(\alpha)} | \sum_\alpha \sqrt{p_\alpha} |\hat\psi_\alpha \rangle \rangle=0,$$

Therefore,
$$R_{l, \alpha, r} \ket{\psi}=0.$$

So only the operators $A_l$ contribute to the transformation from $\ket{\psi} \bra{\psi}$ to $\ket{\phi} \bra{\phi}$.

Using above expression and \eqref{old kraus} 
\begin{equation}
\begin{split}
\mathcal{E} (\ket{\psi} \bra{\psi}) &= \sum_l A_l (\ket{\psi} \bra{\psi})  A_l^\dagger\\ 
&+ \sum_{l, \alpha, r} R_{l, \alpha, r} (\ket{\psi} \bra{\psi})R_{l, \alpha, r}^\dagger\\
&= \ket{\phi} \bra{\phi}
\end{split}
\end{equation}

So, the desired state transformation result holds using extended form of Kraus operator.\\

It remains to show that $\mathcal{E}$ is BDCO. Equivalently
$$\Delta \circ \mathcal{E}= \mathcal{E} \circ \Delta$$

From the construction of $A_l$ it is obvious that
$$A_l \Pi_\alpha X \Pi_\gamma A_l^\dagger$$
is supported between the output blocks $\mathcal{H}_{\pi_l(\alpha)}$ and $\mathcal{H}_{\pi_l(\gamma)}$ and is therefore annihilated by $\Delta$ for $\alpha \neq \gamma$. On the other hand, the terms with $\alpha = \gamma$ are supported inside a single output block and are left invariant by $\Delta$. Hence
\begin{equation}
\Delta(A_l X A_l^\dagger)=\sum_\alpha A_l \Pi_\alpha X \Pi_\alpha A_l^\dagger=A_l \Delta(X) A_l^\dagger
\label{eq 45}
\end{equation}


Similarly, from construction of $R_{l, \alpha, r}$ we can claim that 
$$R_{l, \alpha, r}=\Pi_{\pi_l(\alpha)} R_{l, \alpha, r} \Pi_\alpha.$$
Therefore,
$$R_{l, \alpha, r} X R_{l, \alpha, r}^\dagger=R_{l, \alpha, r} \Pi_\alpha X \Pi_\alpha R_{l, \alpha, r}^\dagger,$$
and the output is supported inside the single block $\mathcal{H}_{\pi_l(\alpha)}$. Thus
\begin{equation}
\Delta (R_{l, \alpha, r} X R_{l, \alpha, r}^\dagger)=R_{l, \alpha, r} \Delta(X) R_{l, \alpha, r}^\dagger
\label{eq 46}
\end{equation}
Finally using \eqref{eq 42}
$$\Delta(\mathcal{E}(X))=\mathcal{E}(\Delta(X)).$$
This completes the proof.

\section{Proof of Corollary \ref{SBIO corollary} }
\label{Appendix B}

The necessity part follows from the BDCO result which we have already proved. Since every SBIO channel belongs to the BDCO class, any pure state transformation achievable by SBIO must satisfy the same necessary condition as BDCO. Therefore, if an SBIO channel $\mathcal{E}$ maps $\ket{\psi} \bra{\psi}$ to $\ket{\phi} \bra{\phi}$, then the block vectors must satisfy $p \prec q.$\\
    For sufficiency we assume $p \prec q.$ By the construction in the converse part of Theorem 4, the Kraus family is $\{A_l\}_l$ $\cup$ $\{R_{l, \alpha, r}\}_{l, \alpha, r}$. From this construction we can conclude that each Kraus operator maps every input block into a single output block. Consequently, for every $X$,
    \eqref{eq 45} and \eqref{eq 46} gives
    $$\Delta(A_l X A_l^\dagger)=A_l \Delta(X) A_l^\dagger,$$
    $$\Delta(R_{l, \alpha, r} X R_{l, \alpha, r}^\dagger)=R_{l, \alpha, r} \Delta(X) R_{l, \alpha, r}^\dagger.$$
    Therefore constructed channel $\mathcal{E}$ is SBIO. This proves the sufficiency.

\end{document}